\documentclass[journal,onecolumn]{IEEEtran}
\usepackage{array}
\usepackage[caption=false,font=normalsize,labelfont=sf,textfont=sf]{subfig}
\usepackage{textcomp,setspace}
\usepackage{stfloats}
\usepackage{url}
\usepackage{verbatim,sarabian}
\usepackage{graphicx}
\usepackage{cite}
\usepackage{pifont,flushend,setspace}
\usepackage[misc,clock,geometry]{ifsym}
\usepackage{euscript}
\hyphenation{op-tical net-works semi-conduc-tor IEEE-Xplore}

\usepackage[switch]{lineno}

\usepackage[table]{xcolor}
\usepackage{tempora} 
\usepackage{cite,soul} 

\usepackage{mathtools,amscd}
\usepackage[normalem]{ulem}

\usepackage[inline, shortlabels]{enumitem}

\usepackage[scaled=0.85]{helvet}
\DeclareTextFontCommand{\ba}{\bfseries\sffamily}
\DeclareTextFontCommand{\hel}{\sffamily}

\setlist{itemjoin ={,\enspace},itemjoin* = {, and\enspace}}
\usepackage{lipsum}
\usepackage{amsthm}
\usepackage{amsopn}
\usepackage{amsmath,bm,amsfonts}
\usepackage{amssymb}
\usepackage{graphicx}
\usepackage{mathrsfs}
\usepackage{dsfont}

\usepackage{caption}
\usepackage{url,rotating,halloweenmath}
\usepackage{algorithm,algorithmic}

\captionsetup[algorithm]{labelformat=empty}

\usepackage[percent]{overpic}

\usepackage{tikz}
\usepackage{mathdots}
\usepackage{yhmath}
\usepackage{cancel}
\usepackage{siunitx}
\usepackage{array}
\usepackage{multirow}
\usepackage{gensymb}
\usepackage{tabularx}
\usepackage{extarrows}
\usepackage{booktabs}
\usetikzlibrary{fadings}
\usetikzlibrary{patterns}
\usetikzlibrary{shadows.blur}
\usetikzlibrary{shapes}

\usepackage[font=footnotesize]{caption}
\setlength{\belowcaptionskip}{-2pt}

\usepackage{hyperref}
\hypersetup{
colorlinks=true,
linkcolor=black,
filecolor=black,      
urlcolor=black,
citecolor = blue,
}

\newtheorem{theorem}{Theorem}

\definecolor{skyblue}{RGB}{48,116,178}
\definecolor{darkred}{RGB}{222,61,62}
\definecolor{darkgreen}{RGB}{67,160,88}
\definecolor{darkorange}{RGB}{223,145,51}
\definecolor{darkblue}{RGB}{51,91,171}
\definecolor{darkgreen}{RGB}{65,133,36}

\usepackage{stackengine}
\stackMath
\newlength\matfield
\newlength\tmplength
\def\matscale{1.}
\newcommand\dimbox[3]{
\setlength\matfield{\matscale\baselineskip}
\setbox0=\hbox{\vphantom{X}\smash{#3}}
\setlength{\tmplength}{#1\matfield-\ht0-\dp0}
\fboxrule=1pt\fboxsep=-\fboxrule\relax
\fbox{\makebox[#2\matfield]{\addstackgap[.5\tmplength]{\box0}}}
}
\newcommand\raiserows[2]{
\setlength\matfield{\matscale\baselineskip}
\raisebox{#1\matfield}{#2}
}
\newcommand\abmat[5]{
\stackunder{\dimbox{#1}{#2}{$\mathbf{#5}$}}{\scriptstyle#3\times #4}
}

\definecolor{crimson}{rgb}{0.86, 0.08, 0.24}
\definecolor{crimson}{rgb}{0.9, 0.13, 0.13}
\newcommand*{\textdoubletriangle}{
\resizebox{!}{8pt}{
\vbox{
\hbox{$\color{crimson}\blacktriangle$}
\nointerlineskip
\kern.5ex
\hbox{$\color{crimson}\blacktriangledown$}
}
}
}

\usepackage{inconsolata}

\renewcommand\hat\widehat
\renewcommand\geq\geqslant
\renewcommand\leq\leqslant

\newcommand{\npara}[1]		{\noindent {\bf #1}}
\newcommand{\bpara}[1]		{\smallskip \noindent {\bf #1}}

\usepackage{xspace}
\usepackage{dsfont}

\def\SE					            {-\textsarab{\SAhd}}

\def \fmax                     		{{f_{\mathsf{max}}}}
\def \fs                     			{{f_s}}

\def\adc					{ADC\xspace}
\def\snyq					{sub-Nyquist\xspace}
\def\sn					{\texttt{sNyq}\SE\xspace}
\def\snu					{\texttt{sNyq}$\lambda$\SE\xspace}
\def\rsnu					{$\rho$\texttt{sNyq}$\lambda$\SE\xspace}

\def\adcDR				    {DR\xspace}
\def\msef                   		{\EuScript{E}_2{\rob{\mat{f}_k,{\widetilde{\mat{f}}}_k}}}

\def\msem                   		{\EuScript{E}_\infty{\rob{\mat{f}_k,{\widetilde{\mat{f}}}_k}}}

\def\usf					{\texttt{USF}\xspace}
\def\khz					{\texttt{kHz}\xspace}
\def\hz					    {\texttt{Hz}\xspace}
\def\usfmc					{\texttt{USF-MC}\xspace}

\def\madcs				{$\mathscr{M}_\lambda$--\texttt{ADCs}\xspace}
\def\madc					{$\mathscr{M}_\lambda$--\texttt{ADC}\xspace}

\def\gn			                	{g[n]}
\def\gdn                			{g_{T_d}[n]}
\newcommand{\zm}[2]		{\xi_{#1}^{#2}}
\def\PO					{\boxed{\mathsf{P1}}}
\def\PT                			{\boxed{\mathsf{P2}}}
\newcommand{\vecd}[2]	 	{\Delta_{#1} \rob{#2}}
\def\sgn               			{\operatorname{sgn}}
\def\opt               			{\diamond}

\newcommand{\MO}[1]		{\mathscr{M}_\lambda\rob{#1}}
\newcommand{\MOl}[2]		{\mathscr{M}_{\lambda_{#2}}\rob{#1}}
\newcommand{\QO}[1]		{\mathscr{Q}_\lambda ({#1})}

\newcommand{\mat}[1]		{\mathbf{#1}}
\newcommand{\vect}[2]		{\mathrm{vec}_{#2}({#1})}

\newcommand{\rg}[1]		{\textcolor{black}{#1}}
\newcommand{\rgn}[1]		{\textcolor{black}{#1}}

\def\th	 {\mathscr{T}_{\sigma}}

\newcommand{\tr}[1] 	{\mathscr{S}^{\sigma}_{\lambda_{#1}}}
\newcommand{\id}[1]		{\mathbb{I}_{#1} }

\newcommand\fig[1]				{Fig.~\ref{#1}}
\newcommand\secref[1]			{Section \ref{#1}}
\newcommand\tabref[1]			{Table \ref{#1}}
\newcommand\algref[1]			{Algorithm \ref{#1}}

\newcommand\subfig[3]			{Fig.~\ref{#1} (${\mathsf{#2}}_{#3}$)}

\def\R					{\mathbb R}
\def\C					{\mathbb C}

\def\ind					{\mathds{1}}
\def\DE					{\stackrel{\rm{def}}{=}}
\def\etal					{\emph{et al}.~}
\def\eg					{\emph{e.g.~}}
\def\ie					{\emph{i.e.~}}
\def\viz					{\emph{viz.~}}

\newcommand{\QMs}		{\mathsf{H}}
\newcommand{\PMi}		{\mathsf{Q}_{i}}
\renewcommand{\H}		{\mathsf{H}}
\newcommand{\transp}{{\top}}

\def \dr               {DR\xspace}
\def \dk               {\mathscr{D}_K}

\def\iZ					{\in \mathbb Z}
\def\iR					{\in \mathbb R}

\def\yt					{y_{\lambda}}
\newcommand{\eqr}[1]		{\stackrel{\eqref{#1}}{=}}

\newcommand{\Lp}[1]{{\mathbf{L}}_{{#1}}}

\newcommand{\normt}[3]{ {\| {#1} \|}_{ {\Lp{#2}} \rob{#3}}}

\newcommand{\norm}[1]{\left\| {#1} \right\|}

\def\l						{\left(}
\def\r						{\right)}

\newcommand\rob[1]			{\l #1 \r}
\newcommand{\sqb}[1]		{\left[ #1 \right]}
\newcommand{\ft}[1]			{\left[\kern-0.15em\left[#1\right]\kern-0.15em\right]}
\newcommand{\fe}[1]		{\left[\kern-0.30em\left[#1\right]\kern-0.30em\right]}
\newcommand{\flr}[1]		{\left\lfloor #1 \right\rfloor}
\makeatletter
\newcommand*{\rom}[1]{\expandafter\@slowromancap\romannumeral #1@}
\makeatother
\usepackage[most,breakable]{tcolorbox}

\tcbset{
left = 0pt,
top = 0pt,
bottom = 0pt,
breakable,
before skip=0.2cm,
after ={\parindent0em},
colback=yellow!10!white, colframe=red!50!black, 
highlight math style= {enhanced, 
colframe=red,colback=red!10!white,
boxsep=0pt,
}}

\newtcbox{\abox}[1][brown]{on line,
arc=0pt,
colback=#1!10!white,
colframe=#1!50!black,
arc=0pt,
outer arc=0pt,
top=1pt,
bottom=0.5pt,
left=0mm,
right=0mm,
leftrule=0pt,
rightrule=0pt,
toprule=0.3mm,
bottomrule=0.3mm,
boxsep=0.1mm
}

\makeatletter
\def\moverlay{\mathpalette\mov@rlay}
\def\mov@rlay#1#2{\leavevmode\vtop{
\baselineskip\z@skip \lineskiplimit-\maxdimen
\ialign{\hfil$\m@th#1##$\hfil\cr#2\crcr}}}
\newcommand{\charfusion}[3][\mathord]{
#1{\ifx#1\mathop\vphantom{#2}\fi
\mathpalette\mov@rlay{#2\cr#3}
}
\ifx#1\mathop\expandafter\displaylimits\fi}
\makeatother

\renewcommand\bar\underline
\newcommand{\mse}[2]		{\EuScript{E}_2{\rob{\mat{#1},\mat{#2}}}}

\begin{document}

\title{
Sub-Nyquist USF Spectral Estimation:\\ 
$K$ Frequencies with $6K+4$ Modulo Samples
}

\author{Ruiming Guo, Yuliang Zhu and Ayush Bhandari

\thanks{This work is supported by the UK Research and Innovation council's \emph{Future Leaders Fellowship} program ``Sensing Beyond Barriers'' (MRC Fellowship award no.~MR/S034897/1).}

\thanks{The authors are with the Dept. of Electrical and Electronic Engineering, Imperial College London, South Kensington, London SW7 2AZ, UK. (Email: \texttt{\{ruiming.guo,yuliang.zhu,a.bhandari\}@imperial.ac.uk} or \texttt{ayush@alum.mit.edu}).}
}

\maketitle

{
\vspace{2pt}

\centering 
\fontsize{11}{11}\selectfont   
\color{blue} \ba{To appear, IEEE Transactions on Signal Processing (2024)}

\vspace{1cm}}

\begin{abstract}
Digital acquisition of high bandwidth signals is particularly challenging when Nyquist rate sampling is impractical. This has led to extensive research in sub-Nyquist sampling methods, primarily for spectral and sinusoidal frequency estimation. However, these methods struggle with high-dynamic-range (HDR) signals that can saturate analog-to-digital converters (ADCs). Addressing this, we introduce a novel sub-Nyquist spectral estimation method, driven by the Unlimited Sensing Framework (USF), utilizing a multi-channel system. The sub-Nyquist USF method aliases samples in both amplitude and frequency domains, rendering the inverse problem particularly challenging. Towards this goal, our exact recovery theorem establishes that $K$ sinusoids of arbitrary amplitudes and frequencies can be recovered from $6K + 4$ modulo samples, remarkably, independent of the sampling rate or folding threshold.
In the true spirit of sub-Nyquist sampling, via modulo ADC hardware experiments, we demonstrate successful spectrum estimation of HDR signals in the kHz range using Hz range sampling rates (0.078\% Nyquist rate). Our experiments also reveal up to a 33-fold improvement in frequency estimation accuracy using one less bit compared to conventional ADCs. These findings open new avenues in spectral estimation applications, e.g., radars, direction-of-arrival (DoA) estimation, and cognitive radio, showcasing the potential of USF.
\end{abstract}

\begin{IEEEkeywords}
Multi-channel architecture, robust recovery, spectral estimation, sub-Nyquist sampling, unlimited sampling.
\end{IEEEkeywords}

\tableofcontents

\newpage

\spacing{1.5}

\section{Introduction}
\label{sec:introduction}

\IEEEPARstart{S}{ub}-Nyquist Sampling \cite{Landau:1967:J,Feng:1996:C,Venkataramani:2000:J,Mishali:2010:J,Mishali:2011:J,Mishali:2011:Ja} is the umbrella term for scenarios where sampling at the Nyquist rate is impractical. This impracticality arises due to sensing bottlenecks including, 
\begin{enumerate}[leftmargin = *, label = $\bullet$]
\item high data rates caused by high-bandwidth signals, as seen in cognitive radios \cite{Quan:2008:J} and radar \cite{BarIlan:2014:J}, or,
\item when the hardware required for high-rate sampling is limited \cite{Hassanieh:2014:C} or prohibitively expensive \cite{Murmann:2015:J}.
\end{enumerate}

These issues have prompted new strategies utilizing inherent signal structure for lower-rate sampling. Foundational work in the area was conducted by Landau \cite{Landau:1967:J}, and later, by Bresler \& co-workers \cite{Feng:1996:C,Venkataramani:2000:J}. Following this, significant advancements have been made by Mishali \& Eldar\cite{Mishali:2010:J,Mishali:2011:J,Mishali:2011:Ja}, particularly in applications like wideband spectrum sensing, where high bandwidths are a significant bottleneck  \cite{Yen:2013:J,Fang:2021:J}. 

In the \snyq context, the sinusoidal model \cite{Stoica:2005:Book,Prony:1795:J, Cooley:1965:J, Robinson:1982:J, Xia:1999:J,Li:2009:J,Xiao:2018:J, Vaidyanathan:2011:J, Qin:2017:J},
\begin{align}
\label{eq:sos}
g\rob{t} = \sum\limits_{k = 0}^{K-1}c_k e^{\jmath \omega_{k} t}, 
\quad
\omega_{k} = 2 \pi f_k,
\end{align}
forms an important sub-class of the multi-band model and needs no introduction to the signal processing community. 
Frequency or Spectral Estimation is a significant research area due to its pervasiveness \cite{Stoica:2005:Book} with first solutions dating back to Prony's ingenious solution in 1795 \cite{Prony:1795:J}.
Numerous studies have focused on developing \snyq Frequency Estimation or \abox[brown]{{\sn}}\footnote{We use \sn to abbreviate \emph{\snyq Frequency Estimation} where the choice of \textsarab{\SAhd} symbolizes a tuning-fork that creates pure tones or frequencies.} methods based on multi-channel sampling. Some of these efforts on \sn predate the work on \snyq sampling. For instance, the pioneering works of Xia \etal \cite{Xia:1999:J,Li:2009:J,Xiao:2018:J} and Vaidyanathan, Pal \& co-workers \cite{Vaidyanathan:2011:J,Qin:2017:J}. 

\begin{figure}[!t]
\centering
\begin{overpic}[width=0.5\columnwidth]{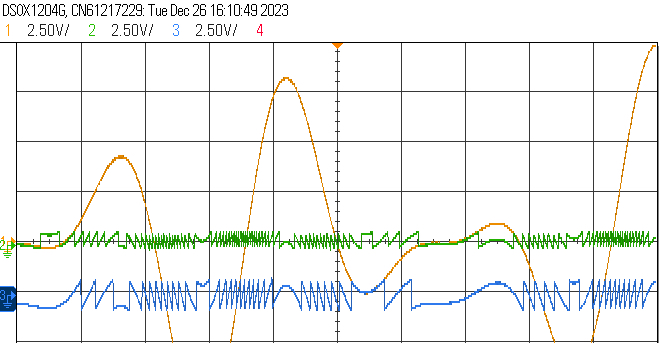}
\put (67,50) {\colorbox{white}{\sf{\bfseries\fontsize{6pt}{7pt}\selectfont {\color{darkorange}---- HDR Signal}}}}
\put (67,46) {\colorbox{white}{\sf{\bfseries\fontsize{6pt}{7pt}\selectfont {\color{darkblue}--- Channel-1 Modulo Signal}}}}
\put (67,42) {\colorbox{white}{\sf{\bfseries\fontsize{6pt}{7pt}\selectfont {\color{darkgreen}--- Channel-2 Modulo Signal}}}}
\end{overpic}
\caption{Oscilloscope view of our \usfmc hardware output (see \secref{sec:experiments}).}
\label{fig:screenshot}
\end{figure}

Despite the thrust of research spanning decades, almost all of the advances have been primarily focused on algorithmic novelty. 
The potential benefits of adapting the forward model or the acquisition pipeline, particularly in addressing practical challenges in HDR sensing \cite{Janzen:2017:J}, have remained unexplored.
To this end, our current research program pivoted on the Unlimited Sensing Framework (\usf) \cite{Bhandari:2017:C,Bhandari:2020:Ja,Bhandari:2021:J,Bhandari:2022:J}, and its applications in radar systems \cite{Feuillen:2023:C} and time-of-flight imaging \cite{Shtendel:2022:C}, provides a strong impetus for developing high-dynamic-range (HDR) sensing strategies for \sn.

\begin{figure}[!t]
\begin{center}
\includegraphics[width =0.65\textwidth]{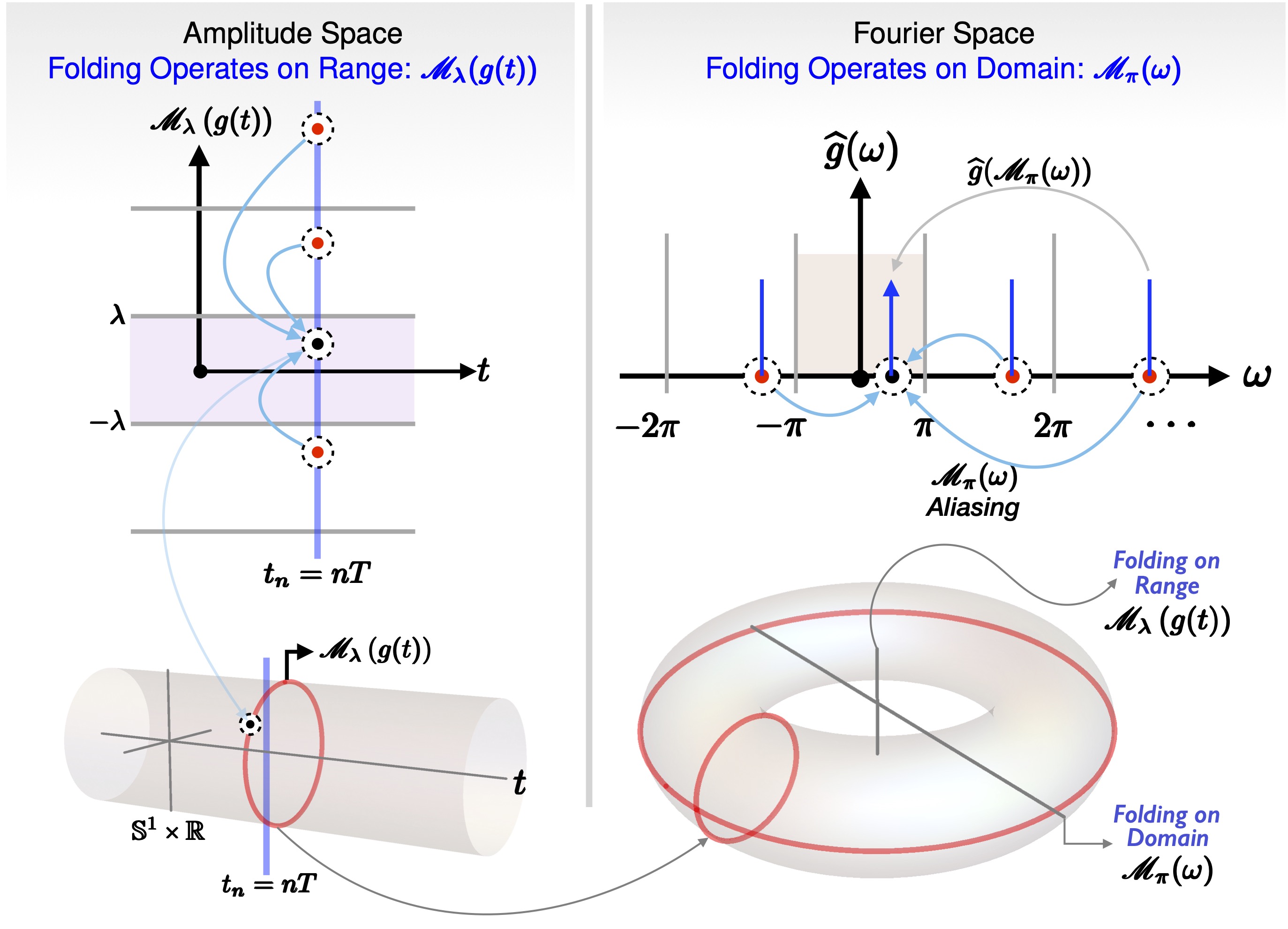}
\end{center}
\caption{
Range versus domain folding. In \usf, folding operates on the amplitude of the signal $\MO{g(t)}$ (see \fig{fig:screenshot}), yielding the quotient space. This is synonymous to how undersampling results in aliasing in the Fourier domain $\mathscr{M}_{\pi} \rob{\omega}$. Sub-Nyquist \usf spectral estimation (\snu) entails range and domain folding, simultaneously, creating a challenging inverse problem.}
\label{fig:MC}		
\end{figure}

\bpara{The Unlimited Sensing Framework.} The \usf is a new step in sampling theory \cite{Unser:2000:J}, fundamentally circumventing the implementation bottlenecks \cite{Murmann:2015:J} in the Shannon-Nyquist method.
The \usf embodies the hardware-software co-design principle, a cornerstone of computational sensing \cite{Bhandari:2022:Book}. Simultaneously achieving \emph{HDR acquisition} with \emph{high digital resolution}---a challenging trade-off in traditional paradigms \cite{Unser:2000:J,Murmann:2015:J}---makes the \usf particularly attractive.
This is achieved by folding an arbitrarily HDR signal in analog domain, as shown in \fig{fig:screenshot}, by deploying folding non-linearity in hardware. For this, the modulo \adc or \madc was proposed in \cite{Bhandari:2021:J}; $\lambda$ denotes the folding threshold. 
The low-dynamic-range, modulo \emph{encoded} samples are algorithmically \emph{decoded} to recover the HDR signal. As shown in \fig{fig:MC}, in the \usf, folding operates on the range of the input signal. This is akin to how undersampling results in aliasing in the Fourier domain.
The unique and wide benefits of the \usf have been explored in theory \cite{Bhandari:2018:C,Bhandari:2020:Ja,Bhandari:2021:J,FernandezMenduina:2021:J,Ordentlich:2018:J,Florescu:2022:J,Liu:2023:J,Shtendel:2023:J,Guo:2023:C,Guo:2023:Ca,Bhandari:2022:J,Beckmann:2022:J} and verified in practice \cite{Bhandari:2021:J,Beckmann:2022:J,Feuillen:2023:C,Beckmann:2024:J}. Hardware experiments with \madcs have shown that, 
\begin{enumerate}[leftmargin = 20pt, label = \ding{224}]
\item signals up to $25\lambda$ to $30\lambda$ \cite{Bhandari:2021:J} can be recovered in the presence of non-idealities, system noise and quantization.
\item a $10$-$12$ dB improvement in the quantization noise floor can be attained by simply replacing conventional \adc by \madc, in the context of radars \cite{Feuillen:2023:C} and tomography \cite{Beckmann:2024:J}. Application specific optimization of the \madc hardware is expected to perform better.
\end{enumerate}
These developments and the end-to-end implementation of the \usf substantiate its pervasive potential, partly attributed to the fundamental role played by sampling theory \cite{Unser:2000:J}.

\bpara{Related Works.}
Current works on the \usf are mainly focused on bandlimited (BL) signal spaces and rely on constant-factor oversampling \cite{Bhandari:2020:Ja,Bhandari:2021:J,Bhandari:2022:J}. 
Though not yet validated via hardware experiments, approaches based on prediction and side information can achieve near-Nyquist rates \cite{Romanov:2019:J}.
Beyond time-domain methods, a Fourier-domain method---\texttt{Fourier-Prony} algorithm---was proposed in \cite{Bhandari:2021:J} and validated via extensive experiments. This method leverages \emph{spectral separation} between BL signal and modulo folds in the Fourier domain. Variants of Fourier separation methods have been recently reported in \cite{Shah:2023:C} where frequency estimation in \texttt{Fourier-Prony} is replaced by $\ell_1$ minimization. Multi-channel \usf has been proposed in \cite{FernandezMenduina:2021:J,Gong:2021:J,Guo:2023:Ca}.

For sinusoidal mixtures \eqref{eq:sos}, assuming $\omega_{k} \in [-\pi,\pi]$, \usf based spectral estimation was considered in \cite{Bhandari:2018:C}. This method uses single channel and relies on substantial oversampling, both in the \emph{number of measurements} as well as the \emph{sampling rate}. Compared to Prony's method which uses $2K$ measurements, the approach in \cite{Bhandari:2018:C} uses $K' \geq 2K + \rob{7\sum_k|c_k|}/\lambda$ samples with sampling period $T\leq 1/2 \pi e$. Furthermore, this approach is based on higher order differences which may be unstable in practice. Can we do better? This motivates new methods that can enable {\bf \snyq \usf Spectral Estimation} or \abox{\snu}. Preliminary results outlining the \snu proof-of-concept, without the robust method and extensive hardware validation in this paper, \rgn{is} presented in \cite{Zhu:2024:C}. 

\bpara{Technical Challenges and Motivation.} While \snyq acquisition is our ultimate goal, we know that reconstruction methods at the core of \usf require oversampling\footnote{See \cite{Bhandari:2017:C,Bhandari:2020:Ja,Bhandari:2021:J} and the follow-up works \cite{Bhandari:2018:C,FernandezMenduina:2021:J,Beckmann:2022:J,Florescu:2022:J,Shtendel:2023:J,Ordentlich:2018:J,Liu:2023:J,Feuillen:2023:C,Beckmann:2024:J,Romanov:2019:J,Shah:2023:C}.}. Hence, \snyq acquisition results in a fundamental contradiction instigating a stalemate between \emph{analysis} and \emph{synthesis}. The key challenge in frequency estimation via \snyq \usf lies in the concurrent inversion of two kinds of non-linear folding operations. As shown in \fig{fig:MC},
\begin{enumerate}[leftmargin = 16pt, label = $\bullet$]
\item The first form of folding kicks in whenever $|g| >\lambda$. In this case, the folding operates on the range of the function. 
\item The second form of folding is induced due to undersampling. Even if the modulo folding is not triggered \ie $|g| \leq \lambda$, undersampling entails aliasing in the frequency domain. This is the case where folding operates on the domain\footnote{Prony's method enables estimation $K$ sinusoids in \eqref{eq:sos} by requiring only $2K$ samples to find the $2K$ unknowns, $\{c_k,\omega_k\}_{k=0}^{K-1}$ . However, Prony's method is susceptible to spectral aliasing. In the case of normalized frequency being $\pi$ (radians) or $T =1 $, it can only reliably estimate $\omega_k \in [-\pi,\pi]$. Frequencies beyond this range \ie $|\omega_k|>\pi$ will experience aliasing, folding back into the detectable range (see \fig{fig:MC}). The aliasing problem is solved via \sn methods \cite{Xia:1999:J,Li:2009:J,Vaidyanathan:2011:J,Qin:2017:J}.}.
\end{enumerate}
Owing to these opposing requirements, any sequential reconstruction, \ie \usf based unfolding \cite{Bhandari:2017:C,Bhandari:2020:Ja,Bhandari:2021:J} followed by \sn \cite{Xia:1999:J,Li:2009:J,Vaidyanathan:2011:J,Qin:2017:J} can not work. This highly challenging scenario motivates investigation of methods for \snu.

\bpara{Contributions.} The key takeaway from this work is the successful recovery of HDR signals in the \emph{kilohertz} bandwidth range through \snyq sampling at \emph{hertz} scale, utilizing the \madc hardware. Notably, our method shows recovery of signals up to $9\lambda$. The end-to-end implementation of the \snu approach requires that its core elements, \ie theory, algorithms, and hardware, work independently and in concert. To this end, our contributions are outlined as follows:

\setlength{\fboxsep}{2pt}
\setlength{\fboxrule}{0.5pt}

\begin{enumerate}[leftmargin=30pt,label = \fbox{\small$\textrm{C}_\arabic*$},itemsep = 2pt]
\item {\bf Theory} We conceptualize a quad-channel \usf architecture (\fig{fig:architecture}), the workhorse of our acquisition. Despite the folding non-linearity, akin to Prony's method, we prove a \snyq sampling theorem (see Theorem~\ref{thm:snUSF}) that is, (i) \uline{independent} of the sampling rate and (ii) enables recovery of $K$ sinusoids with \uline{only} $6K+4$ measurements. 

\item \textbf{Algorithms.} We design two novel algorithms. The first algorithm \abox{\snu} complements our sampling theorem. The second algorithm \abox{\rsnu}---a robust version of \abox{\snu}---is capable of handling hardware imperfections, quantization, and system noise in real-world scenarios.

\item \textbf{Hardware.} We develop a custom designed multi-channel \usf hardware (see \fig{fig:proto}) for validating our theory and algorithms. Extensive hardware experiments demonstrate its potential benefits across a wide range of scenarios.

\end{enumerate}

\begin{figure*}[!t]
\begin{center}
\includegraphics[width =0.98\textwidth]{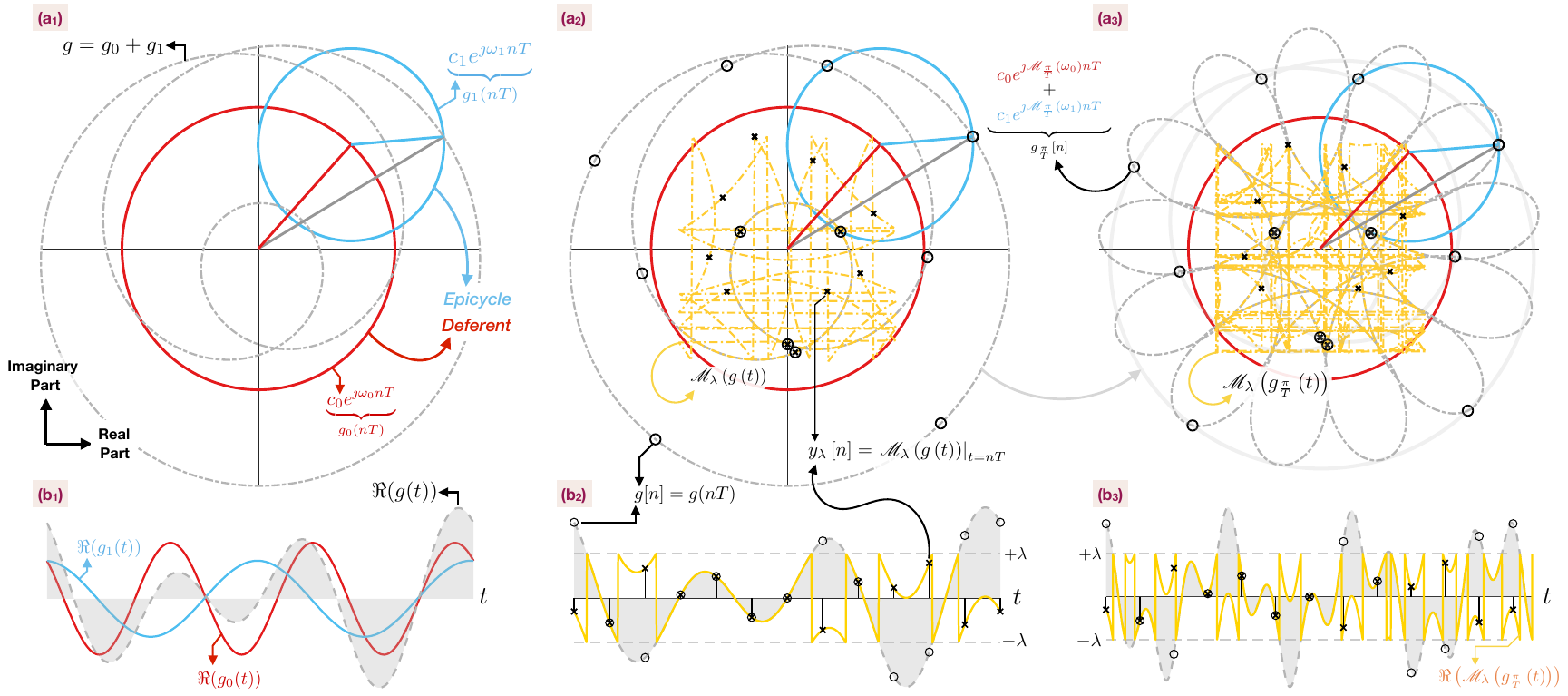}
\end{center}
\caption{Hipparchus--Ptolemy system of deferent-epicycle model with $g\rob{t} =  c_0 e^{\jmath \omega_0 t} + c_1 e^{\jmath \omega_1 t}$ ($K=2$). We also show, pointwise ($\circ$) and folded/\usf samples ($\times$).
(a) Polar-domain representation of $g\rob{t}$ (the Ptolemaic model). 
(b) Time-domain representation (real part of (a)). The goal is to estimate the frequency components $\omega_{0}, \omega_{1}$ from the \snyq \usf measurements. The challenges lie in the concurrent inversion of two forms of non-linear folding operations: 
(1) the frequency of the complex exponentials will be aliased once it is undersampled; 
(2) the amplitude of the signal will be folded whenever it exceeds the range $[-\lambda, +\lambda]$. This makes \snu ill-posed due to the inversion ambiguity as illustrated in ($\mathsf{a}_{2}$-$\mathsf{a}_{3}$) and ($\mathsf{b}_{2}$-$\mathsf{b}_{3}$) as well as in \fig{fig:MC}.}
\label{fig: phasor}
\end{figure*}

\bpara{Notation.} The set of integer, real, rational, irrational, and complex-valued numbers are denoted by $ \mathbb{Z}, \mathbb{R}, \mathbb{Q}, \mathbb{R} \backslash \mathbb{Q}$ and $\mathbb{C}$, respectively. The set of $N$ integers is given by $\id{N} = \{0,\cdots, N-1\}, N\in \mathbb{Z}^{+}$. 
The conjugate of $z\in \mathbb{C}$ is denoted by $\overline{z}$. 
The indicator function on domain $\mathcal{D}$ is denoted by $\ind_{{\mathcal{D}}}$. 
Continuous functions and discerete sequences are written as $g\rob{t}, t\in \mathbb{R}$ and $g\left[ n \right]$, respectively. 
We will be working with $T_d$--delayed sequences denoted by, 
${g_{{T_d}}}\left[ n \right] = {\left. {g\left( t \right)} \right|_{t = nT + {T_d}}}$. 
Vectors and matrices are written in bold lowercase and uppercase fonts, \eg $\mathbf{g} = [g[0],\cdots,g[N-1]]^{\transp} \iR^{N}$ and $\mat{G} = [g_{n,m}]_{n\in\id{N}}^{m\in\id{M}} \iR^{N\times M}$.
The $\Lp{p}\rob{\R}$ space equipped with the $p$-norm or $\normt{\cdot}{p}{\R}$ is the standard Lebesgue space; $\Lp{2}$ and $\Lp{\infty}$ denote the space of square-integrable and bounded functions, respectively. Spaces associated with sequences are denoted by $\ell^{p}$. 
The max-norm $(\Lp{\infty})$ of a function is defined as, $\norm{g}_{\infty} = \inf \{c_0 \geqslant 0: \left|g\rob{t}\right| \leqslant c_0 \}$; for sequences, we use, $\norm{g}_{\infty} = \max_{n} \left| g \sqb{n} \right|$.
The vector space of polynomials with degree less than or equal to $K$ is denoted by $P_{K}$. 
Function derivative is denoted by ${\partial _t}g\left( t \right)$ while for sequences,  the finite difference is denoted by $(\Delta g)\sqb{n} = g[n+1]- g\sqb{n}$. 
The Fourier Transform of $g\in \Lp{1}$ is defined by $\widehat{g} (\omega)  = \int g\rob{t} e^{-\jmath \omega t} dt$. The Discrete Fourier Transform (DFT) of a sequence $\mat{g}\in \ell_{1}$ is denoted by $\widehat{g}[m] = \sum\nolimits_{n=0}^{N-1} g\sqb{n} e^{-\jmath \frac{2\pi mn }{N}}$. Let $\mathbf{V}_{N}^{M}$ denote the $N\times M$ DFT/Vandermonde matrix $\mathbf{V}_{N}^{M}=\bigl[ \zm{N}{n\cdot m}  \bigr]_{n\in\id{N}}^{m\in\id{M}}, \ \zm{N}{n} = e^{-\jmath\frac{2\pi n}{N}}$. The DFT of $\mathbf{g}$ can be expressed algebraically as $\widehat{\mathbf{g}} = \mathbf{V}_{N}^{N} \mathbf{g}, \widehat{\mathbf{g}} \in \mathbb{C}^{N}$. The Kronecker product is denoted by $\otimes$. Element-wise smaller or equal is denoted by $\preccurlyeq$. Diagonal matrices are written as $\dk \rob{\mat{h}}$ with $\sqb{\dk \rob{\mat{h}}}_{k,k} = \sqb{\mat{h}}_{k\in\id{K}}$. 
We use $\rob{f\circ g}\rob{t}= f\rob{g\rob{t}}$ to denote function composition. The quantization operator is defined as $\QO{g}=2\lambda \flr{\rob{g +\lambda}/\rob{2\lambda}}$ where $\flr{g}  = \sup \left\{ {\left. {k \in \mathbb{Z}} \right|k \leqslant g} \right\}$ is the floor function. The centered modulo operator is defined as
\begin{equation}
\label{map}
\mathscr{M}_{\lambda}
:g \mapsto 2\lambda \left( {\fe{ {\frac{g}{{2\lambda }} + \frac{1}{2} } } - \frac{1}{2} } \right), 
\  \ft{g} \DE g - \flr{g}, \ \lambda>0.
\end{equation}
The mean squared error (MSE) between $\mat{x}, \mat{y} \iR^N$ is given by $\mse{x}{y} \DE \frac{1}{N}\sum\nolimits_{n=0}^{N-1} \left|x\sqb{n} -y\sqb{n} \right|^{2}$.

\section{Sub-Nyquist \usf Spectral Estimation}
\label{sec:methodology}

\npara{Problem Statement.} Let $g\rob{t} \in \Lp{\infty}$ denote a sum of complex exponentials signal as defined in \eqref{eq:sos},  
where $\{c_k, \omega_k\}_{k\in\id{K}}$ are the unknown amplitude and frequency, respectively. We assume $g\rob{t}\iR$ to align with the practice. The \usf eliminates clipping or saturation of potentially HDR signals by folding the analog input in hardware \cite{Bhandari:2021:J} via \eqref{map}. The resulting measurements in our case yield, 
$\yt\rob{t} =  {\mathscr{M}_{\lambda} \big({g\rob{t}}\big)}.$
Subsequently, the \emph{folded signal} $\yt\rob{t}$ is sampled in a pointwise fashion, 
\begin{equation}
\label{modulo samples}
\yt \sqb{n} = {\left.  \MO{g\rob{nT}} \right|_{t = nT}}, \quad n\in\id{N}
\end{equation}
where $f_s=\frac{\omega_s}{2\pi}=\frac{1}{T}$ is the sampling frequency. Our goal is to retrieve the parameters $\{c_k,\omega_k\}_{k\in \id{K}}$ from $\yt $. More importantly, we aim to develop the recovery strategy \emph{free from} $\omega_s$, truly paving the path for \snyq procedure.

\bpara{Hipparchus--Ptolemy Planetary Model: \\ Visual Illustration of Problem Statement.}	
We consider the \emph{deferent-epicycle model} \cite{Gallavotti:2001:J} of two planet system dating back to 2nd century AD. Let $g\rob{t} =  c_0 e^{\jmath \omega_0 t} + c_1 e^{\jmath \omega_1 t}$ where, in the \emph{Ptolemaic model}, $g_0\rob{t} = c_0 e^{\jmath \omega_0 t}$ and $g_1\rob{t} = c_1 e^{\jmath \omega_1 t}$ take the roles of deferent and epicycle, respectively. Our recovery problem is visually illustrated in the phasor/polar space in \fig{fig: phasor}. The equidistant samples and its $\MO{\cdot}$ folded versions are denoted by $\circ$ and $\times$, respectively. \subfig{fig: phasor}{b}{} shows the real part of \subfig{fig: phasor}{a}{}. The challenges in recovery arise from the concurrent inversion of folding along the \emph{range} and \emph{domain} (see \fig{fig:MC}), induced by \snyq-plus-\usf based sampling.
\begin{enumerate}[leftmargin = *]
\item Undersampling the analog signal results in frequency aliasing. This creates an ambiguity in frequency estimation as the measurements ($\times$) simultaneously correspond to multiple solutions as shown in \subfig{fig: phasor}{a}{2} and \subfig{fig: phasor}{a}{3}. 
\item The amplitude will be folded whenever it exceeds the \dr $[-\lambda,\lambda]$, which makes the inversion of $\MO{\cdot}$ ill-posed as multiple distinct signals could yield the same folded measurements as shown in \subfig{fig: phasor}{b}{2} and \subfig{fig: phasor}{b}{3}.
\end{enumerate}
The inversion of the folding operation $\MO{\cdot}$ is challenging inverse problem \cite{Bhandari:2017:C,Bhandari:2020:Ja,Bhandari:2021:J,Bhandari:2022:J}. Coupled with \snyq sampling, this turns into a highly ill-posed problem and motivates the design of a new methods that can simultaneously handle the two forms of folding non-linearities shown in \fig{fig:MC}.

\begin{figure}[!t]
\begin{center}
\includegraphics[width =0.65\textwidth]{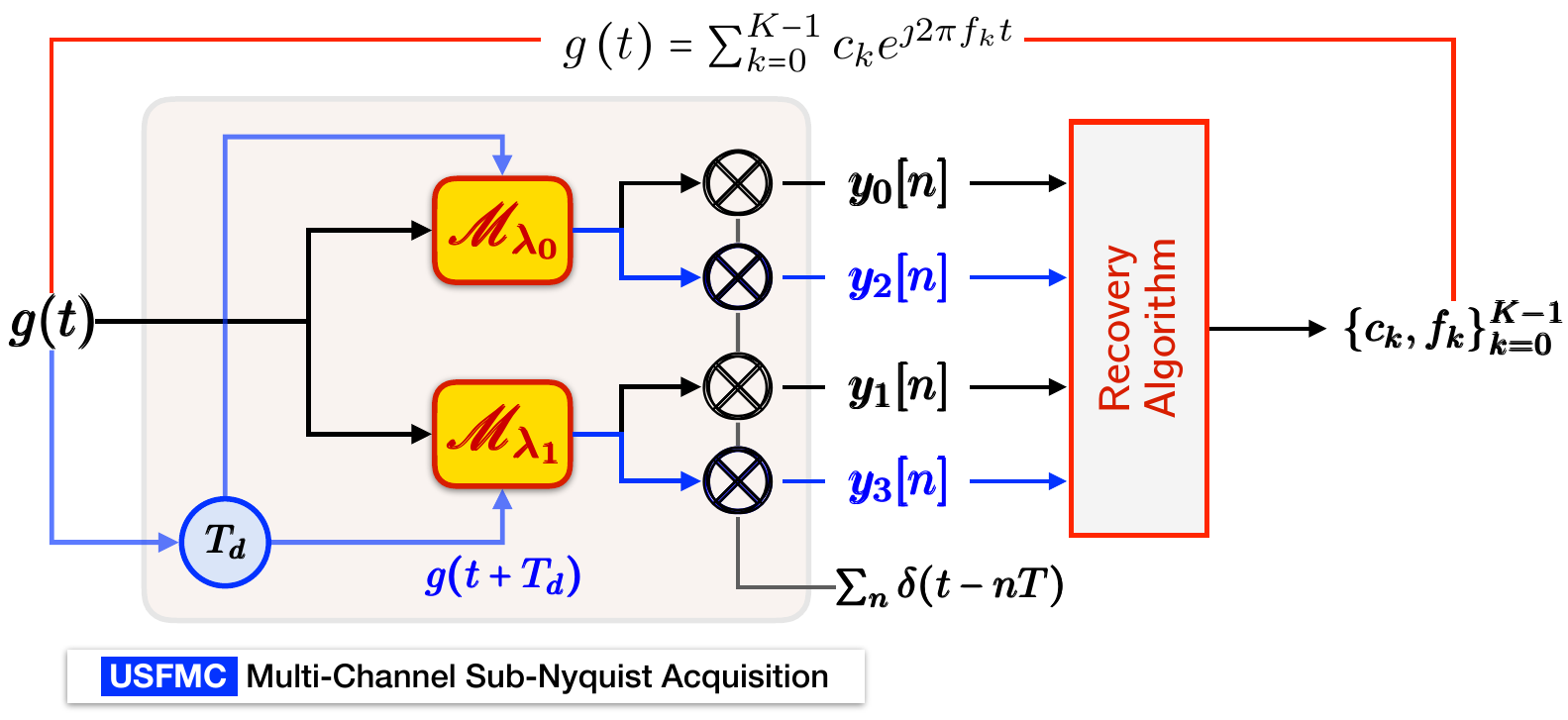}
\end{center}
\caption{The \usfmc pipeline for \snu requiring only $6K+4$ samples. It is independent of sampling rate and folding threshold. For hardware, see \fig{fig:proto}.}
\label{fig:architecture}
\end{figure}

\bpara{The \snu Method: Exact Recovery in the Ideal Case}

\renewcommand\descriptionlabel[1]{\hspace{-0pt}{\raisebox{-2pt}\FilledDiamondShadowC} \hspace{0.5pt}\bf{#1}}

\begin{description}[leftmargin = 0pt,itemsep = 4pt,labelwidth=!,labelsep = !]
\item{\abox{Mathematical Model of the Sensing Pipeline}} 

Our starting point is the mathematical modelling of the multi-channel or \usfmc pipeline in \fig{fig:architecture}. Denote by $\{y_{i}\}_{i\in \id{4}}$ the \usfmc measurements defined as, 
\begin{equation}
\label{eq:MCS}
\left\{ {\begin{array}{*{20}{l}}
y_{0} \sqb{n}= \MOl{\gn}{0} 
&
y_{1} \sqb{n} = \MOl{\gn}{1}
\\ 
y_{2} \sqb{n}= \MOl{\gdn}{0}
&
y_{3} \sqb{n} = \MOl{\gdn}{1}
\end{array}} \right.
\end{equation}
where 
$g\left[ n \right] = {\left. {g\left( t \right)} \right|_{t = nT}}$ and 
${g_{{T_d}}}\left[ n \right] = {\left. {g\left( t \right)} \right|_{t = nT + {T_d}}}$, and where $\{\lambda_0,\lambda_1,T_d\}$ are \usfmc design parameters.

\item{\abox{Simplification of the Measurements}}

Let $\bar{g} \DE \rob{\Delta g}$. For notational simplification, we define 
\begin{equation}
\label{eq:notation}
\left\{ {\begin{array}{*{20}{l}}
u_{0} = \bar{g}  - v_{0}, &v_{0} = {\Delta {y}_{\lambda_0}}   \\ 
u_{1} = \bar{g}  - v_{1}, &v_{1} = {\Delta {y}_{\lambda_1}} \\ 
u_{2} = \bar{g}_{T_d}  - v_{2}, &v_{2} ={\Delta {y}_{\lambda_0,T_d}}    \\ 
u_{3} = \bar{g}_{T_d}  - v_{3}, &v_{3} = {\Delta {y}_{\lambda_1, T_d}}  
\end{array}} \right.
\mbox{with} 
\begin{array}{*{20}{c}}
& \bar{g}_{i} = u_{i} + v_{i}, \\ 
& i\in \id{4}.
\end{array}
\end{equation}
To uncover the underlying mathematical structure of the \usfmc samples, we apply the modular decomposition property \cite{Bhandari:2021:J} in each channel which allows us to write,  
\begin{equation}
\label{eq:difference model}
u_{i} \sqb{n} =  \bar{g}_{i} \sqb{n} - v_{i}\sqb{n} \equiv   {\sum\limits_{l=0}^{L_i-1} {\gamma_{l,i} \delta [n - n_{l,i}]}}
\end{equation}
where $\{\gamma_{l,i},n_{l,i}\}_{l=0}^{L_{i}-1}$ are the unknowns that parametrize the residue signal, $u_i$.  Note that $\gamma_{l,i}\in 2\lambda_{i} \mathbb{Z}$, $\lambda_i = \lambda_{i\bmod 2}, i\in\id{4}$.

\item{\abox{Overview of the \snu Recovery Strategy}}

Since each channel $\{v_i\}_{i\in\id{4}}$ is undersampled, the folding non-linearity along amplitude cannot be inverted by non-linear filtering of amplitudes \cite{Bhandari:2020:Ja} or by Fourier-domain partitioning developed in \cite{Bhandari:2021:J}. 
Nonetheless, common to the theme of \usf is that idea of \emph{residue recovery} \cite{Bhandari:2020:Ja,Bhandari:2021:J} \ie given $v_i$, estimate $u_i$ so that $v_i + u_i \mapsto g_i$.
We use a similar idea which follows a different approach. The key insight being that $g_i$ is \emph{pairwise common to all channels}, its elimination in $v_0$ and $v_1$ results in recovery of the residue. Note that,  
\begin{equation}
\label{eq:residue difference}
v_{1}\sqb{n} - v_{0} \sqb{n} \eqr{eq:notation}
u_{0}\sqb{n} - u_{1} \sqb{n}   \equiv  \sum\limits_{l=0}^{L_0+L_1 - 1} {d_{l} \delta [n - n_{l}]}
\end{equation}
where $d_{l} =2\lambda_{1}\gamma_{l,1} - 2\lambda_{0}\gamma_{l,0}$ and the unknowns $\{\gamma_{l,0}, \gamma_{l,1}\} \in \mathbb{Z}^{2}$ can be uniquely estimated due to the assumption that ${\lambda_{0}}/{\lambda_{1}} \in \mathbb{R} \backslash \mathbb{Q}$. In effect, $v_{i} = \rob{ \Delta {y}_{\lambda_i}}, i=0,1$ provides for $\widetilde{u}_i$ in \eqref{eq:difference model} via \eqref{eq:residue difference} due to the co-irrationality of $\lambda_i$. Once $\widetilde{u}_i$'s are estimated, we can recover $\widetilde{g}_i = v_i + \widetilde{u}_i $.

Having obtained $\{\widetilde{g}_i \sqb{n}\}_{n\in\id{N-1}}$ for $i = 0,1$, provided that $N-1\geq 2K$, one can use Prony's method \cite{Prony:1795:J} to estimate the $2K$ unknowns in \eqref{eq:sos}. Nonetheless, the estimates of $\omega_k$ can still be aliased due to the \snyq sampling. This is where the additional channels, $i = 2,3$ play a vital role because we can use channel redundancy to recover the true (non-aliased) frequencies. To this end,
$v_2$ and $v_3$ comprising of $T_d$-delayed samples yield $\widetilde{g}_{T_d}\sqb{n}$, again following the residue recovery method above. There on, we use the $\{ {\widetilde {g} \sqb{n}},\widetilde{g}_{T_d}\sqb{n}\}$ to estimate non-aliased frequencies as in previous approaches \cite{Fu:2018:J}.

\end{description}

Our main result is summarized as follows.
\begin{theorem}
\label{thm:snUSF}
Let $g\rob{t} = \sum\nolimits_{k = 0}^{K-1}c_k e^{\jmath \omega_k t}$. Given multi-channel modulo measurements $\{y_i\sqb{n}\}_{n\in\id{N_{i}}}^{i\in\id{4}}$ defined in \eqref{eq:MCS}. Then, $g\rob{t}$ can be exactly recovered with $N_{i}\geqslant \rob{2-\flr{\frac{i}{2}}}K+1, i\in\id{4}$ samples if $\tfrac{\lambda_0}{\lambda_1} \in \mathbb{R} \backslash \mathbb{Q} $ and $ T_d \leq \frac{\pi}{\rob{\max_k |\omega_k|}}$.
\end{theorem}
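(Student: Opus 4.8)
\emph{Proof strategy.} The plan is to decouple the doubly-folded inverse problem into two tractable stages. First, I would \emph{invert the amplitude folding} to recover the finite-difference sequences $\bar g = \Delta g$ and $\bar g_{T_d} = \Delta g_{T_d}$ pointwise, exploiting that the underlying $g$ is common across each channel pair. Second, I would apply a \emph{spectral-estimation-plus-de-aliasing} argument to pass from these sequences to the true parameters $\{c_k,\omega_k\}$, with the delayed channels resolving the frequency ambiguity that undersampling leaves behind. The sample budget $N_i\geq\rob{2-\flr{i/2}}K+1$ then follows from counting the unknowns at each stage.

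\textbf{Stage 1: residue recovery via co-irrational thresholds.} Applying the modular decomposition property \eqref{eq:difference model} channel-wise, I would first record that $u_i=\bar g_i-v_i$ from \eqref{eq:notation} is, pointwise, an integer multiple of $2\lambda_i$: since $\MOl{g}{i}=g-2\lambda_i\varepsilon_i$ with $\varepsilon_i[n]\in\Z$, we have $u_i=2\lambda_i\rob{\Delta\varepsilon_i}$. Because channels $0$ and $1$ observe the \emph{same} $g$, the common term cancels in \eqref{eq:residue difference}, so $v_1-v_0=u_0-u_1$ is \emph{known} and, at each $n$, takes the form $2\lambda_0 a-2\lambda_1 b$ with $\rob{a,b}\in\Z^2$. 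The key algebraic fact is that the map $\rob{a,b}\mapsto 2\lambda_0 a-2\lambda_1 b$ is injective on $\Z^2$ precisely when $\lambda_0/\lambda_1\in\R\backslash\Q$, since any collision would force $\lambda_0/\lambda_1$ to be rational. Hence every sample of $v_1-v_0$ uniquely determines $\rob{u_0[n],u_1[n]}$, and $\bar g=v_0+u_0=\Delta g$ is recovered at each sampled instant. Repeating verbatim with channels $2,3$ (which reuse $\lambda_0,\lambda_1$) yields $\Delta g_{T_d}$.

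\textbf{Stage 2: Prony estimation and de-aliasing.} Since $\Delta g[n]=\sum_k a_k z_k^{n}$ with nodes $z_k=e^{\jmath\omega_k T}$ and coefficients $a_k=c_k\rob{z_k-1}$, Prony's method applied to the $\geq 2K$ recovered samples returns the $K$ nodes $z_k$ and the $a_k$; this forces $N_0,N_1\geq 2K+1$. The nodes only pin down $\omega_k$ modulo $\omega_s=2\pi/T$, i.e. the aliasing is not yet resolved. To break it, I would use that $\Delta g_{T_d}[n]=\sum_k a_k e^{\jmath\omega_k T_d}z_k^{n}$ shares the \emph{same} nodes; with the $z_k$ already known, solving the resulting $K\times K$ Vandermonde system (whence $N_2,N_3\geq K+1$) gives $b_k=a_k e^{\jmath\omega_k T_d}$, and $b_k/a_k=e^{\jmath\omega_k T_d}$. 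The hypothesis $T_d\leq \pi/\rob{\max_k|\omega_k|}$ guarantees $|\omega_k T_d|\leq\pi$, so $e^{\jmath\omega_k T_d}$ determines $\omega_k T_d$, hence $\omega_k$, \emph{uniquely and independently of $T$}; finally $c_k=a_k/\rob{z_k-1}$.

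\textbf{Main obstacle.} The delicate point is the Stage-2 de-aliasing, as it is what renders the result rate-independent. It hinges on (i) the node sets recovered from the two channel pairs being \emph{identical}, so the $b_k$ are automatically matched to the correct $z_k$ with no combinatorial pairing, and (ii) the nodes $z_k=e^{\jmath\omega_k T}$ being distinct with $z_k\neq 1$. Distinctness rules out the degenerate, measure-zero event in which two frequencies alias to a common node (where separation is impossible), while $z_k\neq 1$ is needed to invert $a_k=c_k\rob{z_k-1}$; both hold for generic $T$ and should be stated as a genericity assumption. The remaining work is bookkeeping: verifying that \eqref{eq:difference model} holds sample-by-sample and that the finite differences do not inflate the support beyond what the stated $N_i$ accommodate.
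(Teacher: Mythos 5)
Your proposal follows essentially the same two-stage route as the paper's own proof: range unfolding via the injectivity of $(a,b)\mapsto 2\lambda_0 a - 2\lambda_1 b$ on $\mathbb{Z}^2$ under the irrationality hypothesis (the paper phrases this as a proof by contradiction, which is the same algebraic fact), followed by Prony recovery of the aliased nodes and de-aliasing through the phase factor $e^{\jmath \omega_k T_d}$ extracted from the delayed channels, with identical sample counts $N_0,N_1\geqslant 2K+1$ and $N_2,N_3\geqslant K+1$. The genericity caveats you flag---distinct aliased nodes and $e^{\jmath\omega_k T}\neq 1$---are left implicit in the paper's proof as well, so your treatment is, if anything, slightly more careful on that point.
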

\begin{proof}
Our proof is constructive and is based on the inversion of the \emph{range} and \emph{domain folding} operators. 

\bpara{i) Range Unfolding.} Given modulo samples $\{v_{0},v_{1}\}_{n\in\id{N_{i}}}$, $\tfrac{\lambda_0}{\lambda_1 }\in \mathbb{R} \backslash \mathbb{Q}$, we obtain $\{u_{0},u_{1}\}_{n\in\id{N_{i}}}$ via $$\rob{v_{1}- v_{0}} \sqb{n} \eqr{eq:residue difference}  \sum\limits_{l=0}^{L_0+L_1 - 1} {d_{l} \delta [n - n_{l}]},$$ leading to a unique decomposition. This can be proved by contradiction; suppose $\exists\{\alpha_{l,0}, \alpha_{l,1}\} \iZ$ such that $d_{l} =2\lambda_{1}\alpha_{l,1} - 2\lambda_{0}\alpha_{l,0}$, which results in $\tfrac{\gamma_{l,1}-\alpha_{l,1}} {\gamma_{l,0}-\alpha_{l,0}}  =   \frac{\lambda_{0}}{\lambda_{1}}.$ This contradicts the assumption $\tfrac{\lambda_0}{\lambda_1 }\in \mathbb{R} \backslash \mathbb{Q}$ as $\frac{\gamma_{l,1}-\alpha_{l,1}} {\gamma_{l,0}-\alpha_{l,0}} \in \mathbb{Q}$. Finding $\{\gamma_{l,i}\}$ from $d_l$ can be implemented via dictionary or brutal force search. Hence, $\{{u}_0, {u}_1\}$ can be recovered using \eqref{eq:difference model}; the same applies to $\{ {u}_2, {u}_3 \}$ via $\{{v}_2, {v}_3\}$. 

\bpara{ii) Domain Unfolding.} $\{g_{i}[n]\}_{n\in\id{N_{i}}}^{i\in\id{4}}$ can be expressed as 
\begin{equation}
\label{eq:delay}
\bar{g}_{i}\sqb{n} = \sum\limits_{k \in\id{K}}{c}_{k,i} e^{\jmath{\nu_k}T n},  
\begin{array}{*{10}{l}}
{c}_{k,i} = c_k(e^{\jmath {\omega_k}T }-1)e^{\jmath\rob{  {\omega_k}T_d \flr{\frac{i}{2}}}} \\ 
\nu_k = \mathscr{M}_{\omega_s} \rob{\omega_k}
\end{array}.
\end{equation}
The common frequencies $\{ \nu_k \}_{k\in\id{K}}$ can be found by using Prony's method \cite{Prony:1795:J} as follows. Let $h \sqb{n}$ be the filter with $z$-transform
$
\H (z)  = \sum_{n=0}^{K} h \sqb{n} z^{-n} = \prod_{k=0}^{K-1} (1 - u_k z^{-1})$ and roots $u_k = e^{\jmath { {\nu_k}}T} $. Then, $h \sqb{n}$ annihilates $\bar{g}_{i}\sqb{n}$ or, 
\[(h \ast \bar{g}_{i})\sqb{n} = \sum\limits_{m=0}^{K} h [m]  \bar{g}_{i}[n-m] 
= \sum\limits_{k=0}^{K-1} {c}_{k,i}  
\rob{\sum\limits_{m=0}^{K} h [m] u_k^{-m}}
u_k^{n}  = 0\] 
because $\mathsf{H}\rob{u_k} = 0$. The annihilation filter coefficients can be found by solving a system of linear equations. The aliased frequencies $\nu_k$ can be obtained by computing the zeros of the polynomial $\H \in P_{K}$. The non-aliased frequencies can be determined by computing the phase difference $e^{\jmath {\omega_k}T_d}$ in~\eqref{eq:delay}, provided that 
$\rob{\max_k |\omega_k|} T_d \leqslant \pi \Longleftrightarrow   T_d \leqslant \frac{\pi}{\rob{\max_k |\omega_k|}}$. The problem can be solved once there are at least many equations as unknowns; \ie $N_{0}, N_{1} \geqslant 2K+1$ samples for estimating $2K$ unknowns $\{\nu_{k}, {c}_{k,i}\}_{k\in \id{K}}$ and $N_{2}, N_{3} \geqslant  K+1$ samples for estimating $K$ unknown phases $\{e^{\jmath  {\omega_k}T_d}\}_{k\in \id{K}}$. 
This completes the proof. 
\end{proof}

\section{Sub-Nyquist \usf Spectral Estimation in Practice }
\label{sec:robust reconstruction}

Theorem \ref{thm:snUSF} provides guarantees for \snu independent of the sampling rate. However, 
the fragility of the assumption $\tfrac{\lambda_0}{\lambda_1} \in \mathbb{R} \backslash \mathbb{Q} $ with the challenges of the real-world scenarios may compromise HDR capability of the \snu method. Challenges we have identified via hardware experiments include, 
\begin{enumerate}[leftmargin=*, label = $\bullet$] 
\item {\bf Noise.} 
System and quantization noise, inherent in electronic circuits, inevitably lead to a mix of bounded and unbounded noise. This combination can cause outliers in both residue separation \eqref{eq:residue difference} and spectral estimation.
\item {\bf Hardware Imperfections.} Non-ideal foldings may occur in hardware implementation of folding non-linearities in the analog domain, resulting in off-grid folds or jumps of $u_{i}, i\in \id{4}$; \ie $\gamma_{l,i} \notin  2\lambda_{i\bmod 2}\mathbb{Z}$. This may compromise the performance of the recovery approach.
\item {\bf Experimental Calibration.} The precision of threshold calibration is limited up to certain digits, where the irrational threshold settings cannot be attained. This affects the robustness of residue separation \eqref{eq:residue difference}. 
\end{enumerate}

This necessitates the development of robust recovery methods that can tackle noise and hardware imperfections, specially when operating at \snyq rates. To this end, next, we introduce \abox{\rsnu} which is a robust version of \snu.

\bpara{Towards \rsnu: Robust Recovery Algorithm.} Here, we leverage properties that were previously not utilized. These include,  
(i) $\bar{g}_{i}\in \mathsf{span} \{u_k = e^{\jmath { {\nu_k}}T} \}_{k\in \id{K}}, i\in\id{4}$,
(ii) $g_i = g_{i+1}$,$i=\{0,2\}$.
Combining these properties result in, 
\begin{align}
\label{eq:global description}
\bar{g}_{i}\sqb{n} & = u_{i}\sqb{n} + v_{i}\sqb{n} = \sum\limits_{k = 0}^{K-1}c_{k,i}u_k^{n}, \quad i\in \id{4} \\ 
c_{k,0} & = c_{k,1} =\breve{c}_{k} , \ \ \mbox{ and }  \ \
c_{k,2} = c_{k,3} = \breve{c}_{k} e^{\jmath {\omega_k}T_d} \quad
\breve{c}_{k}  = c_k\rob{e^{\jmath  {\omega_k}T }-1} \quad u_k =  e^{\jmath { \nu_k}T }.
\label{eq:identical amplitude} 
\end{align}
In summary, this allows us to develop an optimization method utilizing a \emph{global signal model} across all channels \eqref{eq:global description}, moving from channel-wise processing to a an approach that
jointly harnesses all the channels and capitalizes on the uncorrelated nature of inter-channel noise and folding non-idealities.

Combining all channels, a concise representation reads,
\begin{equation}
\label{eq:block matrix}
\mathbf{\bar{g}}  = \left( {\bf I}_{4}\otimes \mathbf{\Theta}\right) \mathbf{c} 
\end{equation}
where, 
$\mathbf{\Theta} = \bigl[ u_k^{n}  \bigr]_{n\in\id{N-1}}^{k\in\id{K}}$,
$\mathbf{c}_{i} = \left[ c_{0,i},\cdots, c_{K-1,i}\right]^{\transp}$ 
and,
$\mat{c} = \vect{\mat{c}_i}{I}$  and $\bar{\mat{g}} = \vect{\mat{\bar{\mat{g}}}_i}{I}$ and $\vect{\mat{x}}{K}$  denotes vectorization of vectors $\{\mat{x}_k\}_{k\in\id{K}}$. This summarizes our data model.

In presence of distortions such as  quantization resolution, system noise and hardware mis-match, we can only recover the signal up to a tolerance level of $\sigma$ (measurement distortion). In view of \eqref{eq:identical amplitude}, in the noiseless case, we see that $c_{k,0}  = c_{k,1}$ and $c_{k,2} = c_{k,3} $. Hence, in the noisy case, the 2 channels may only differ by a maximum deviation of $\sigma$ leading to the constraints, 
$\norm{ \mathbf{u}_{i}+\mathbf{v}_{i} -\mathbf{u}_{i+1}-\mathbf{v}_{i+1} }_{\infty}\leqslant \sigma$, $i= \{0,2\}$, which acts as a regularization term. Hence, in the real-world scenario, the joint recovery problem can be posed as:
\begin{tcolorbox}[ams align,breakable]
\label{eq:global problem}
&\mathop {\min}
_{\substack{
\mat{\Theta},
\mat{c},
\mat{u}_i}}
\     \norm{\mathbf{\bar{g}}  - \left( {\bf I}_{4}\otimes \mathbf{\Theta}\right) \mathbf{c}}_{2}^{2}, \ \mbox{s.t.} \ \mat{u}_{i} \in 2\lambda_{i} \mathbb{Z} \nonumber \\
&  \norm{ \mathbf{u}_{i}+\mathbf{v}_{i} -\mathbf{u}_{i+1}-\mathbf{v}_{i+1} }_{\infty}\leqslant \sigma,  \ i= \{0,2\}.
\end{tcolorbox}
The minimization in \eqref{eq:global problem} is non-trivial, due to the structure and the constraints of the setup.  
In order to tackle this problem, we opt for an alternating minimization strategy where the goal is to split \eqref{eq:global problem} into two tractable sub-problems, \viz $\PO$ that addresses recovery of $\mat{\Theta}, \mat{c}$ via joint spectral estimation and $\PT$ that solves for $\mathbf{u}_{i},i\in\id{4}$ via robust residue separation. 

\subsection{Sub-Problem: Joint Spectral Estimation.} 
\label{sec:spone}
\bpara{Sparse Representation Model.} Assuming that $\mathbf{\bar{u}}_{i}, i\in\id{4}$ is known, it remains to estimate $\{\mat{\Theta}, \mat{c}\}$ by minimizing,
\begin{align}
\label{eq:vector frequency estimation}
\PO \quad \quad  \mathop {\min}
_{\substack{
\mathbf{\Theta},	\mathbf{c}}} \quad \  &   \norm{\mathbf{\bar{g}}  - \left( {\bf I}_{4}\otimes \mathbf{\Theta}\right) \mathbf{c}}_{2}^{2}
\end{align}
which is a non-convex problem. To find the solution, we utilize the following parameterization
in $\zm{N-1}{m}={\rm e}^{-\jmath \frac{2\pi m}{N-1}}$ \cite{Guo:2022:J}: 
\begin{equation}
\sum_{n=0}^{N-2} \bar{g}_{i}\sqb{n} \, \zm{N-1}{n\cdot m}=\frac{\PMi ( \zm{N-1}{m} )} {\QMs ( \zm{N-1}{m})},  \quad 
\begin{array}{*{20}{l}}
\PMi \in P_{K-1} \\ 
\QMs \in P_{K}
\end{array}
\label{eq:poly}
\end{equation}
where the denominator $\QMs\rob{z}$ \eqref{eq:multi-channel model} is the annihilation filter \emph{common} to all channels; its roots uniquely determine the frequencies $u_k$. Let $\widehat{\bar{g}}_{i}$ be the DFT of ${\bar{g}}_{i} \sqb{n}$, the \usfmc samples can be expressed as
\begin{equation}
\label{eq:multi-channel model}
\widehat{\bar{g}}_{i} [m] = \frac{\PMi ( \zm{N-1}{m} )} {\QMs ( \zm{N-1}{m})} ,\ i \in\id{4}.
\end{equation}
Parseval’s identity implies \eqref{eq:block matrix} and \eqref{eq:poly} are equivalent. Hence,
\begin{equation}
\label{eq:vector_fitting}
\mathop {\min}\limits_{{\QMs},{\PMi}} \  \sum\limits_{i = 0}^{3} \sum\limits_{m = 0}^{N-2} {{{\left| {{ \widehat{\bar{g}}_{i} [m]  } - \frac{ \PMi ( \zm{N-1}{m} ) }{\QMs ( \zm{N-1}{m})}} \right|}^2}}.
\end{equation}
To solve \eqref{eq:vector_fitting}, we adopt an iterative strategy \ie we construct a collection of estimates for $\QMs$, and select the one that minimizes the mean-square error (MSE) of the \usfmc data via \eqref{eq:vector_fitting}. These estimates $\{ {\mathsf{H}^{[j]}} \}$ are found iteratively by solving the approximate problem (since ${\mathsf{H}^{[j-1]}}\approx {\mathsf{H}}$)
\begin{equation}
\min\limits_{\mathsf{H},\PMi} \sum\limits_{i = 0}^{3} \sum\limits_{m = 0}^{N-2} {{{\left| { \frac{{\widehat{\bar{g}}_{i} [m] }{\mathsf{H}\left( \zm{N-1}{m} \right)} -{{\PMi}\left( \zm{N-1}{m} \right)}}{{{\mathsf{H}^{[j-1]}}\left( \zm{N-1}{m} \right)}}} \right|}^2}}, 
\quad
j\in \id{j_{\max}}.
\label{eq:linear_fitting}
\end{equation}
Initializing ${\mathsf{H}^{[j]}}$ differently provides estimation diversity.

With $\QMs$ and $\PMi$ known, the frequencies $\nu_k$ are obtained by $\mathsf{roots}\rob{\QMs}\mapsto u_k$ since $\mathsf{H}\rob{u_k} = 0 $. The corresponding amplitudes can be calculated via least-squares method,
\begin{equation}
\label{eq:parameter estimates}
\left\{
\begin{aligned}
&\nu_{k}=\tfrac{ 1}{T} \mathsf{Im}\bigl( \log(u_k)\bigr), \\
&c_{k,i} =-{\tfrac{{  u_k {\PMi}\left( u_k^{-1} \right)}}{{\left( {1 - u_k^{N-1}} \right)
{\left. {{\partial_{z}\QMs}\left( z \right)} \right|_{z = {u_k^{-1}}}}}}}.
\end{aligned} 
\right.
\end{equation}
Having estimated the parameters $\{\nu_{k}, c_{k,i}\}_{k\in\id{K}}^{i\in\id{4}}$, the multi-channel sinusoidal samples can be reconstructed using~\eqref{eq:global description}. 

\bpara{Algorithmic Implementation.} We provide an algorithm for \eqref{eq:linear_fitting}. With ${\widehat{\bar{\mathbf{g}}}}_{i} =\mathbf{V}_{N-1}^{N-1}{{\bar{\mathbf{g}}}}_{i}$,
the trigonometric polynomials $$\{\PMi(\zm{N-1}{m}), \QMs(\zm{N-1}{m})\}_{m\in \id{N-1}}$$ in \eqref{eq:linear_fitting} are written as, 
\[
\bigl[\PMi(\zm{N-1}{m})\bigr] = \mathbf{V}_{N-1}^{K} \mathbf{q}_{i}
\ \mbox{ and } \
\bigl[\QMs(\zm{N-1}{m})\bigr] = \mathbf{V}_{N-1}^{K + 1} \mathbf{h}
\]
where $\mathbf{q}_{i}$ and $\mathbf{h}$ are the coefficients of $\PMi \in P_{K-1}$ and $\QMs \in P_{K}$, stacked in vector form and denote $\mat{q} = \vect{\mat{q}_i}{4}$. Assuming the estimate $\mathbf{h}^{[j]}$ of $\mathbf{h}$ at iteration-$j$ is known, the minimization at $j+1$ can be reformulated in matrix form as,  
\begin{equation}
\label{eq:vector frequency matrix form}
\{{\bf q}^{[j+1]},{\bf h}^{[j+1]}\}=\mathop{\rm arg\,min}\nolimits_{{\mathbf{{q}}}, \mathbf{h}}\left\|{\mathbf{A}}^{[j]} \mathbf{h}-{\mathbf{B}}^{[j]} {\mathbf{{q}}}\right\|^{2}_2
\end{equation}
where $\mat{R}^{[j]}, \mat{A}^{[j]}$ and $\mat{B}^{[j]}$ and their decompositions are respectively given by \eqref{eq:matrix} (on the top of this page). 
\begin{figure*}
\[
\small
{\begin{array}{*{20}{c|}}
{\mathbf{A}}^{[j]} = {\mathbf{D}}  {\bf R}^{[j]} \mathbf{V}_{N-1}^{K+1} &
{\mathbf{B}}^{[j]} ={\bf I}_{4}\otimes\rob{{\bf R}^{[j]}\mathbf{V}_{N-1}^{K}} &
{\bf R}^{[j]} = \left(\dk \left( \mathbf{V}_{N-1}^{K + 1} \mathbf{h}^{[j]} \right)\right)^{-1} &
{\mathbf{D}} = \begin{bmatrix}
\dk({\widehat{\bar{\mathbf{g}}}}_{0}),
\dk({\widehat{\bar{\mathbf{g}}}}_{1}),
\dk({\widehat{\bar{\mathbf{g}}}}_{2}),
\dk({\widehat{\bar{\mathbf{g}}}}_{3})
\end{bmatrix}^{\transp}
\end{array}}\]
\begin{equation}
\label{eq:matrix}
\tag{20}
\renewcommand\matscale{.51}
\abmat{8}{4}{4(N-1)}{(K+1)}{A^{\mathrm{[j]}}} = 
\abmat{8}{5}{4(N-1)}{(N-1)}{D} 
\raiserows{1.5}{\abmat{5}{5}{(N-1)}{(N-1)}{{R}^{[j]}}} 
\raiserows{1.5}{\abmat{5}{3}{(N-1)}{(K+1)}{\mathbf{V}}}
\quad 
\mbox{ and }
\quad \ \ 
\abmat{8}{6}{4(N-1)}{4K}{{\mathbf{B}}^{[j]} } = 
\abmat{3}{3}{4}{4}{{\bf I}}\otimes 
\left(
\raiserows{0}{\abmat{5}{5}{(N-1)}{(N-1)}{{\bf R}^{[j]}}} \raiserows{0}{\abmat{5}{3}{(N-1)}{(K)}{\mathbf{V}}}
\right)
\vspace{-5pt}
\end{equation}
\vspace{-5pt}
\rule{\textwidth}{0.8pt}
\end{figure*}
\setcounter{equation}{20}
Notice that $\mathbf{B}^{[j]}$ is full rank because $\mathbf{R}^{[j]}$ and $\mathbf{V}_{N-1}^{K}$ are full rank. We adopt a normalization constraint 
\begin{equation}
\label{eq:normalization constraint}
\frac{1}{2\pi}\int_{0}^{2\pi}\overline{\QMs^{[0]}(e^{-\jmath\theta})}\QMs^{[j+1]}(e^{-\jmath\theta})d\theta=1
\end{equation}
to ensure the uniqueness of the optimal solution to~\eqref{eq:vector frequency matrix form}, where $\QMs^{[0]} $ is the initialization of the algorithm. 
Consequently, the quadratic minimization~\eqref{eq:vector frequency matrix form} can be eventually posed as
\begin{tcolorbox}[ams align] 
& \{{\bf q}^{[j+1]},{\bf h}^{[j+1]}\}=\mathop{\rm arg\,min}\limits_{{\mathbf{{q}}}, \mathbf{h}}\left\|{\mathbf{A}}^{[j]} \mathbf{h}-{\mathbf{B}}^{[j]} {\mathbf{{q}}}\right\|^{2} \\
&\mbox{subject to} \  \frac{1}{2\pi} \int_{0}^{2\pi}\overline{\QMs^{[0]}(e^{-\jmath\theta})}\QMs^{[j+1]}(e^{-\jmath\theta})d\theta=1  \nonumber
\end{tcolorbox}
which results in the update 
\begin{equation}
\begin{bmatrix}{\bf h}^{[j+1]}\\-{\bf q}^{[j+1]}\end{bmatrix}=\kappa\left(\bigl[{\bf A}^{[j]},{\bf B}^{[j]}\bigr]^{\sf H}\bigl[{\bf A}^{[j]},{\bf B}^{[j]}\bigr]\right)^{-1}\begin{bmatrix}{\bf h}^{[0]}\\0\end{bmatrix}
\label{eq:vector frequency solution}
\end{equation}
where $\mathbf{h}^{[0]}$ are the coefficients of $\QMs^{[0]} $ and $\kappa$ is the Lagrange multiplier such that the normalization constraint~\eqref{eq:normalization constraint} is satisfied\footnote{For each initialization $\mathbf{h}^{[0]}$, we update the trigonometric polynomial coefficients $\mathbf{h}^{[j+1]}$ until the stopping criterion~\eqref{eq:stopping criterion} is met. If \eqref{eq:stopping criterion} is not met for certain choices of $\mathbf{h}^{[j+1]}$ after reaching the maximum iteration count $j_{\rm max} $, we restart the algorithm with a different initialization.}. The procedure is summarized in \algref{alg:snu}.

\subsection{Sub-Problem: Robust Residue Separation.}

With $\mathbf{\bar{g}}_{i}$ known from the method in \secref{sec:spone}, the minimization on $\mat{u}_{i}$ essentially boils down to a convex optimization problem.
The main result is as follows.
\begin{theorem}
\label{theorem: residue estimation}
Given signal estimates ${\bar{g}}_{i} =  u_{i}+ v_{i}, \ i \in\id{4}$, the optimal solution $u_i^{\opt}$ to \eqref{eq:global problem} is given by,
\begin{tcolorbox}[ams align, top = 0pt, bottom = 0pt,breakable]
\label{eq:residue estimation problem}
&\mathbf{u}_{i}^{\opt} = 
\begin{cases}
\tr{i} \left(\mathbf{\bar{g}}_{0} + \mathbf{\bar{g}}_{1} - 2 \mathbf{v}_{i} , \left(-1\right)^{i} \left(\mathbf{\bar{g}}_{0} - \mathbf{\bar{g}}_{1}\right)\right), i=0,1  \\
\tr{i} \left(\mathbf{\bar{g}}_{2} + \mathbf{\bar{g}}_{3} - 2 \mathbf{v}_{i} , \left(-1\right)^{i}\left(\mathbf{\bar{g}}_{2} - \mathbf{\bar{g}}_{3}\right)\right), i=2,3
\end{cases}\nonumber \\
&\tr{} \left(\cdot \right)= \left( \mathscr{Q}_\lambda  \circ\th\right) \left(\cdot\right), \ \lambda_i = \lambda_{\rob{i\bmod 2}} \nonumber \\
&\th \left(x, y\right) = \frac{1}{2} \left(x + \operatorname{sgn}\left(y\right) \mathop {\min} \left(  \left| y\right|, \sigma \right)  \right).
\end{tcolorbox}
\end{theorem}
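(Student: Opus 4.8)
\emph{Proof strategy.} The plan is to freeze the spectral parameters $\{\mathbf{\Theta},\mathbf{c}\}$ returned by $\PO$, so that the reconstruction $\mathbf{\bar g}_i$ becomes fixed data and \eqref{eq:global problem} reduces to a minimization purely over the residues $\mathbf{u}_i$. With $\{\mathbf{\Theta},\mathbf{c}\}$ fixed, the fidelity term collapses to $\sum_{i\in\id{4}}\norm{\mathbf{u}_i+\mathbf{v}_i-\mathbf{\bar g}_i}_2^2$, and both this objective and the two $\ell_\infty$ consistency constraints decouple into two independent problems: one over $(\mathbf{u}_0,\mathbf{u}_1)$ coupled by the $i=0$ constraint, and one over $(\mathbf{u}_2,\mathbf{u}_3)$ coupled by the $i=2$ constraint. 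Because every term is a per-sample sum and the $\ell_\infty$ constraint is enforced entrywise, I would further reduce each pair to a family of scalar two-variable problems, solve a single sample $n$, and invoke separability to conclude for the whole sequence.

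For the pair $(\mathbf{u}_0,\mathbf{u}_1)$, introduce $a_i=u_i+v_i$, so the scalar subproblem is the Euclidean projection of $\rob{\bar g_0,\bar g_1}$ onto the slab $\{|a_0-a_1|\le\sigma\}$, subject also to $a_i\in v_i+2\lambda_i\mathbb{Z}$. First I would drop the lattice constraint and solve the continuous relaxation, which is a projection onto a slab; by the KKT conditions on the single active constraint $a_0-a_1=\sgn\rob{\bar g_0-\bar g_1}\sigma$ the minimizer is $a_i^\star=\tfrac12\big((\bar g_0+\bar g_1)+(-1)^i\sgn(\bar g_0-\bar g_1)\min(|\bar g_0-\bar g_1|,\sigma)\big)$, which is exactly $\th\rob{\bar g_0+\bar g_1,\,(-1)^i(\bar g_0-\bar g_1)}$; the $\min(\cdot,\sigma)$ clipping automatically subsumes the inactive case $|\bar g_0-\bar g_1|\le\sigma$, where $a_i^\star=\bar g_i$. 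Undoing $u_i=a_i^\star-v_i$ absorbs a term $-v_i$ into the first argument, yielding the continuous residue $\th\rob{\bar g_0+\bar g_1-2v_i,\,(-1)^i(\bar g_0-\bar g_1)}$, matching the inner operator in the claimed formula.

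Next I would reinstate the lattice constraint $u_i\in 2\lambda_i\mathbb{Z}$ with $\lambda_i=\lambda_{i\bmod 2}$. Since the objective is a separable sum of strictly convex quadratics in the scalar variables and the admissible set for $u_i$ is the scaled integer lattice $2\lambda_i\mathbb{Z}$, the constrained minimizer of each coordinate is the nearest lattice point to the continuous optimum, which is precisely the action of the quantizer $\mathscr{Q}_\lambda$ (recall $\QO{g}=2\lambda\flr{\rob{g+\lambda}/\rob{2\lambda}}$). Composing the slab projection with this rounding gives $\tr{i}=\mathscr{Q}_\lambda\circ\th$, and repeating the identical argument for $(\mathbf{u}_2,\mathbf{u}_3)$ with $\rob{\bar g_2,\bar g_3}$ in place of $\rob{\bar g_0,\bar g_1}$ completes both branches of $\mathbf{u}_i^\opt$.

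The step I expect to be the main obstacle is justifying that rounding the continuous optimum via $\mathscr{Q}_\lambda$ returns the true optimum of the lattice-constrained problem rather than a merely feasible approximation, since in general projecting onto the intersection of a convex slab and a lattice coset differs from projecting onto the slab and then onto the lattice. I would address this by exploiting the entrywise separability together with the fact that, once the slab is resolved by $\th$, each coordinate's residual quadratic is monotone on either side of the lattice, so the nearest-lattice-point rule is globally optimal per coordinate; care is then needed to check that the rounded residues remain feasible for the $\ell_\infty$ constraint, which I expect to hold because the clipping level $\sigma$ is consistent with the folding amplitude $2\lambda_i$ and the bounded-distortion regime under which \eqref{eq:global problem} is posed.
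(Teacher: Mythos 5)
Your proposal is correct and follows essentially the same route as the paper's proof: fix the spectral reconstruction ${\bar{g}}_i$, decouple the problem into the $(\mathbf{u}_0,\mathbf{u}_1)$ and $(\mathbf{u}_2,\mathbf{u}_3)$ pairs, solve the continuous (lattice-free) problem entrywise---your slab-projection KKT argument is exactly the paper's Lagrangian with complementary slackness and its three-case analysis on $\zeta=\bar{g}_0-\bar{g}_1$, just phrased geometrically---and then enforce $u_i\in 2\lambda_i\mathbb{Z}$ by applying $\mathscr{Q}_{\lambda_i}$. The one obstacle you flag, namely that rounding the continuous optimum to the lattice need not solve the jointly constrained problem exactly and may even violate the $\ell_\infty$ constraint, is a legitimate concern, but the paper's own proof makes precisely the same leap (it asserts the quantization step in a single closing sentence without justification), so your argument is neither weaker nor different in substance from the published one.
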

\begin{proof}
Given $\mathbf{\bar{g}}_{i}$ using \eqref{eq:global description} via \eqref{eq:parameter estimates}, \eqref{eq:global problem} amounts to a constrained minimization on $\{\mat{u}_{i}\}_{i\in\id{4}}$. Thus, we reformulate \eqref{eq:global problem},
\begin{align}
\label{eq:residue formulation}
\PT \ \	&\mathop {\min}
_{\substack{
\mathbf{u}_{i}	}} \ \  \sum\nolimits_{i\in\id{4}} \norm{\mathbf{u}_{i} + \mathbf{v}_{i} - \mathbf{\bar{g}}_{i}}_{2}^{2}, \ \mbox{s.t.} \ \mat{u}_{i} \in 2\lambda_{i} \mathbb{Z} \nonumber \\
&  \norm{ \mathbf{u}_{i}+\mathbf{v}_{i} -\mathbf{u}_{i+1}-\mathbf{v}_{i+1} }_{\infty}\leqslant \sigma,  \ i= \{0,2\}.
\end{align}
Consider the above minimization~\eqref{eq:residue formulation} without the on-grid constraints. Let $\vecd{i}{\mat{u}} = \mat{u}_{i+1} - \mat{u}_{i}$ denotes the vector difference. By rewriting the $\infty$-norm constraints, we have,
\begin{align}
\label{eq:residue estimation problem-2}
\mathop {\min}
_{\substack{
u_{i}\sqb{n} 	}} \quad   &\sum\nolimits_{i\in\id{4}} \norm{\mathbf{u}_{i} + \mathbf{v}_{i} - \mathbf{\bar{g}}_{i}}_{2}^{2}  \\
\mbox{subject to} \ \  &  -\sigma  \mathbf{1} \preccurlyeq \vecd{i}{\mat{u}+\mat{v}} \preccurlyeq  \sigma  \mathbf{1}, \ \  i=0,2  \nonumber
\end{align} 
where $\preccurlyeq$ denotes element-wise smaller or equal and $\mathbf{1} = \left[1,\cdots,1\right]^{\transp} \in \mathbb{R}^{N-1}$.  Notice that, the minimization on $u_0, u_1$ is independent of $u_2, u_3$, thus~\eqref{eq:residue estimation problem-2} can be split into two independent sub-problems. Considering the variable symmetry, let's consider the minimization on $u_0, u_1$, leading to 
\begin{align}
\label{eq:residue estimation problem-3}
\mathop {\min}
_{\substack{
\mathbf{u}_{0}, \mathbf{u}_{1}	}} \quad   & \norm{\mathbf{u}_{0} + \mathbf{v}_{0} - \mathbf{\bar{g}}_{0}}_{2}^{2} + \norm{\mathbf{u}_{1} + \mathbf{v}_{1} - \mathbf{\bar{g}}_{1}}_{2}^{2} \\
\mbox{subject to} \ \  & -\sigma  \mathbf{1} \preccurlyeq  \vecd{0}{\mat{u}+\mat{v}} \preccurlyeq  \sigma  \mathbf{1}. \label{eq:constraints-1}
\end{align}
We follow Lagrange-multiplier approach for the minimization. To this end, we define, 
\[
\mathcal{L} \left( 	\mathbf{u}_{0}, \mathbf{u}_{1}\right) = \sum_{i=0}^{1}\norm{\mathbf{u}_{i} + \mathbf{v}_{i} - \mathbf{\bar{g}}_{i}}_{2}^{2} +{\bm \beta}_{0}^{\transp} (-\vecd{0}{\mat{u}+\mat{v}} -  \sigma  \mathbf{1}) + {\bm \beta}_{1}^{\transp} ( \vecd{0}{\mat{u}+\mat{v}} -  \sigma  \mathbf{1}),
\]
and $\vecd{i}{\mat{u}} = \mat{u}_{i+1} - \mat{u}_{i}$ denotes the vector difference. The minimizer satisfies  
\begin{align*}
{\bm \beta}_{0}^{\transp} (-\vecd{0}{\mat{u}+\mat{v}} -  \sigma  \mathbf{1})  = \mathbf{0}, \quad {\bm \beta}_{1}^{\transp} ( \vecd{0}{\mat{u}+\mat{v}} -  \sigma  \mathbf{1}) = \mathbf{0}.
\end{align*}
Taking the derivative results in 
\[
\left\{ {\begin{array}{*{20}{l}}
\frac{\partial \mathcal{L}  \left( 	\mathbf{u}_{0}, \mathbf{u}_{1}\right)}{\partial u_{0}\sqb{n}} = 
\rob{u_{0} + v_{0} - {\bar{g}}_{0}  + {\bm \beta}_{0} + {\bm \beta}_{1}} \sqb{n}= 0 \\
\frac{\partial \mathcal{L}  \left( 	\mathbf{u}_{0}, \mathbf{u}_{1}\right)}{\partial u_{1}\sqb{n}} =  
\rob{u_{1} + v_{1} - {\bar{g}}_{1} - {\bm \beta}_{0} - {\bm \beta}_{1}}\sqb{n}= 0
\end{array}} \right. .
\]
Notice that, the LHS and RHS of \eqref{eq:constraints-1} cannot hold simultaneously due to its conflicting nature. For instance, ${\bm \beta}_{1}[n]=0$ if the LHS of \eqref{eq:constraints-1} is active. To simplify expressions, let $\eta_{i}\sqb{n}= ({\bar{g}_{0} + \bar{g}_{1} - 2v_{i}})\sqb{n}$ and $\zeta\sqb{n} = ({\bar{g}}_{0} - {\bar{g}}_{1})\sqb{n}$. Thus, the optimal solution can be categorized into three cases:
\begin{enumerate}[leftmargin=*, label = \fbox{\ \#\arabic*\ },labelsep = 12pt] 
\itemsep -4pt
\item $| \zeta\sqb{n} | < \sigma$. In this situation, both LHS and RHS of \eqref{eq:constraints-1} are loose or inactive, resulting in
\begin{equation}
\label{eq:0 sigma}
u_{i}^{\opt}\sqb{n} =\tfrac{1}{2} (\eta_{i}\sqb{n} + (-1)^{i}  \zeta\sqb{n}), \ i=\{0,1\}.
\end{equation}
\item $\zeta\sqb{n} \leqslant -\sigma$. In this scenario, the RHS of \eqref{eq:constraints-1} is tight, 
\begin{equation}
\label{eq:-sigma}
u_{i}^{\opt}\sqb{n} =  \tfrac{1}{2} ( \eta_{i}\sqb{n} - (-1)^{i}   \sigma), \ i=\{0,1\}.
\end{equation}
\item $ \zeta\sqb{n} \geqslant \sigma$. In this scenario, the LHS of \eqref{eq:constraints-1} is tight, 
\begin{equation}
\label{eq:+sigma}
u_{i}^{\opt}\sqb{n} =  \tfrac{1}{2} ( \eta_{i}\sqb{n} + (-1)^{i}   \sigma), \ i=\{0,1\}.
\end{equation}
\end{enumerate} 
From \eqref{eq:0 sigma}-\eqref{eq:+sigma}, we can characterize the optimizer as ($i=0,1$)
\begin{equation}
\label{eq:optimal solution formula}
u_{i}^{\opt}\sqb{n} =  \frac{1}{2} ( \eta_{i}\sqb{n} + (-1)^{i} \sgn (\zeta\sqb{n})  \mathop {\min} (  \left| \zeta\sqb{n}\right|, \sigma ))
\end{equation}
which leads to a thresholding operator in \eqref{eq:residue estimation problem}. Similarly, the same calculation applies to $u_{i}^{\opt}, i=2,3$. Enforcing $u_{i}^{\opt}$ on the grid $2\mathbb{Z}\lambda_{i}$ can be implemented via the quantization operator $\mathscr{Q}_{\lambda_i}
=2\lambda_i \flr{\rob{ \cdot +\lambda_i}/\rob{2\lambda_i}}$,
which completes the proof.
\end{proof}

\begin{algorithm}[!t]
\setstretch{0.9}
\caption{\abox{{\bf Algorithm 1.} \snu: \snyq \usf Spectral Estimation}}
\label{alg:snu}
\begin{algorithmic}[1]
\REQUIRE Multi-channel samples $\{{\bf 	{\bar{g}}}_{i}\}_{i\in\id{4}}$
\STATE Compute the DFT of 
${\bf{\bar{g}}}_{i}$: ${\widehat{\bar{\mathbf{g}}}}_{i} =\mathbf{V}_{N-1}^{N-1}{{\bar{\mathbf{g}}}}_{i}, i\in\id{4}$
\FOR{\texttt{loop = 1} to \texttt{max. initializations}}
\STATE Initialize $\mathbf{h}$ as $\mathbf{h}^{[0]}$;
\FOR{$j=1$ to $j_{\rm max}$}
\STATE Construct the matrices in~\eqref{eq:matrix} \\
\STATE Update $\mathbf{h}^i$ and ${\bf q}^{i}$ by solving~\eqref{eq:vector frequency solution};
\IF{\eqref{eq:stopping criterion} holds}
\STATE \texttt{Terminate all loops};
\ENDIF
\ENDFOR
\ENDFOR
\STATE $\mathbf{h}=\mathbf{h}^{[j]}$, ${\mathbf{{q}}}={\mathbf{{q}}}^{[j]}$;\\
Calculate $\{\nu_{k}, c_{k,i}\}_{k\in\id{K}}^{i\in\id{4}}$ using~\eqref{eq:parameter estimates}. 
\ENSURE The sinusoidal parameters $\{\nu_{k}, c_{k,i}\}_{k\in \id{K}}^{i\in\id{4}}$. 
\end{algorithmic}
\end{algorithm}

\bpara{Stopping Criterion.} We initialize the \rsnu method in \algref{alg:rsnu} by computing $\{\mathbf{u}_{i}^{[0]}\}_{i\in\id{4}}$ in \eqref{eq:residue difference}, which is a reasonable intialization. Using \eqref{eq:residue estimation problem}, we then estimate $\{\nu_{k}, c_{k,i}\}_{k\in\id{K}}^{i\in\id{4}}$ via \eqref{eq:parameter estimates} based on which we refine $\{\mathbf{u}_{i}^{[0]}\}_{i\in\id{4}}$ via \eqref{eq:residue estimation problem}. The following 
\begin{equation}
\label{eq:stopping criterion}
\mathop {\max} \left( \norm{\mathbf{\bar{g}}_{0} - \mathbf{\bar{g}}_{1}  }_{\infty}, \norm{ \mathbf{\bar{g}}_{2} - \mathbf{\bar{g}}_{3} }_{\infty} \right)  \leqslant \sigma
\end{equation}
is used as our stopping criterion. Iterating the method leads to robust estimates as it eliminates distortion, which is validated via hardware experiments. 

\bpara{Algorithmic Complexity.}
\rg{In terms of computational complexity, the proposed \rsnu method requires residue initializations via look-up table and QR decompositions of matrices ${\bf A}^{[j]}\in\C^{4(N-1)\times(K+1)}$ and ${\bf B}^{[j]}\in\C^{ 4(N-1)\times 4K }$ in joint spectral estimation (most time-consuming). The QR decomposition of ${\bf A}^{[j]}$ essentially amounts to performing $K+1$-times {Gram–Schmidt orthogonalization} as $K<N$. Moreover, notice that, ${\bf B}^{[j]}$ is sparse due to the Kronecker operation described in \eqref{eq:matrix}. These algebraic structures allow for effective acceleration in matrix computations. Hence, the total computational complexity mainly scales with the number of sinusoids $K$ and varies slowly with the number of samples $N$, which significantly increases the algorithm run-speed when processing large-scale hardware data (see the last column of \tabref{tab:exp1} and \tabref{tab:exp2}). 
}

\begin{algorithm}[!t]
\setstretch{0.9}
\caption{\abox{{\bf Algorithm 2.}
\rsnu: Robust Version of \snu Method}}
\label{alg:rsnu}
\begin{algorithmic}[1]
\REQUIRE Folded Samples $\{\mathbf{y}_{i}\}_{i\in\id{4}}$.
\STATE Compute the finite difference $\{\mathbf{v}_{i}\}_{i\in\id{4}}$.
\STATE Compute the initial estimates $\{\mathbf{u}_{i}^{[0]}\}_{i\in\id{4}}$ by using~\eqref{eq:residue difference}.
\FOR{\texttt{loop = 1} to \texttt{max. iterations}}
\STATE Update $\{\nu_{k}, c_{k,i}\}_{k\in\id{K}}^{i\in\id{4}}$ using \algref{alg:snu}; 
\STATE Update $\{\mathbf{u}_{i}\}_{i\in\id{4}}$ using~\eqref{eq:residue estimation problem};
\IF{\eqref{eq:stopping criterion} holds}
\STATE \texttt{Terminate all loops};
\ENDIF
\ENDFOR
\STATE Calculate $\{\omega_{k}, c_{k,i}\}_{k\in\id{K}}^{i\in\id{4}}$ using~\eqref{eq:global description};
\ENSURE The sinusoidal parameters $\{\omega_{k}, c_{k,i}\}_{k\in\id{K}}^{i\in\id{4}}$. 
\end{algorithmic}
\end{algorithm}

\begin{figure}[!t]
\begin{center}

\includegraphics[width =0.45\textwidth]{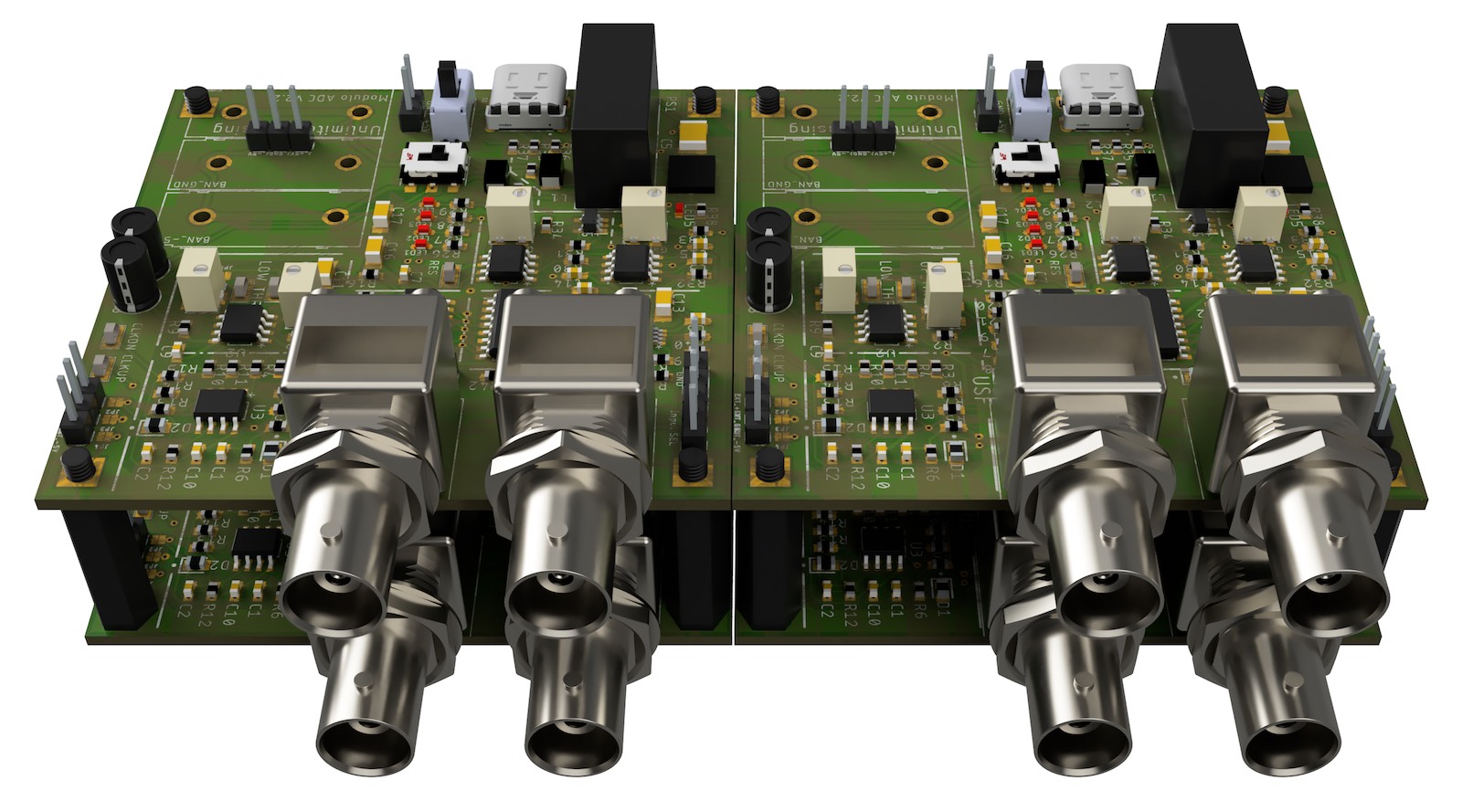}
\end{center}
\caption{Custom-designed \madc hardware prototype for robust \snyq spectrum estimation (\rsnu) experiments reported in \secref{sec:experiments}. Our testbed is designed such that experimental parameters, such as the sampling step $T$, time delay $T_d$ and thresholds $\lambda_{0},\lambda_{1}$ are tunable.}
\label{fig:proto}
\end{figure}

\begin{figure}[!t]
\begin{center}

\includegraphics[width =0.5\textwidth]{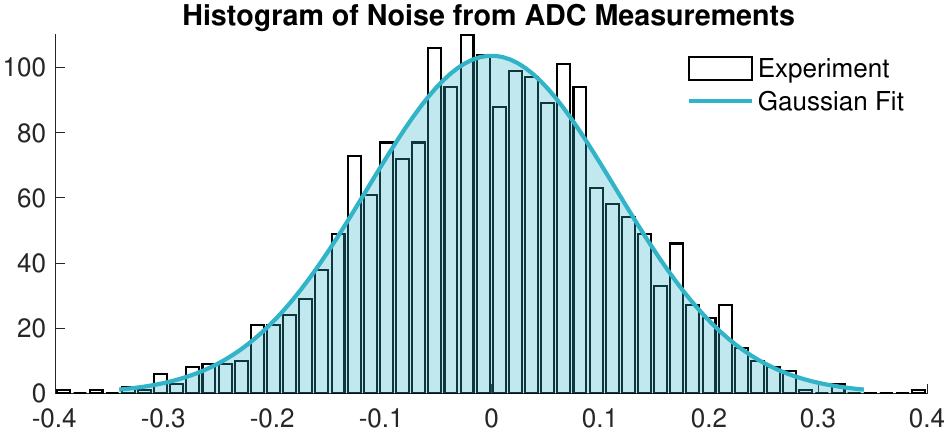}
\end{center}
\caption{{\color{black}Noise characterization via hardware experiments. We plot the noise histogram by estimating noise at the output of \madc hardware \cite{Bhandari:2021:J}. Gaussian approximation of the histogram provides a proxy for noise statistics with approximately zero mean and standard deviation $=0.11$, translating to about $19.40$ dB signal-to-noise ratio (SNR).}}
\label{fig:noise}
\end{figure}

\section{Experiments}
\label{sec:experiments}

The overarching goal of our experiments is to computationally push the frequency range of our hardware by using \snyq spectral estimation methods. In particular, through a series of $14$ experiments, we show that HDR signals in the \emph{kilohertz} range can be estimated via \snyq sampling at \emph{hertz} range, utilizing the \madc hardware, providing a factor of $1000\times$ improvement in real-world scenarios. This also serves as a validation of our \rsnu method (Algorithm~\ref{alg:rsnu}) developed in \secref{sec:robust reconstruction}. 

In noiseless scenarios, numerical simulations demonstrate that the method presented in \secref{sec:methodology} results in an exact signal recovery (up to the machine precision). Hence, we report only hardware experiments, \rgn{which are more challenging. As discussed in \secref{sec:robust reconstruction}, one has to overcome data distortion arising from system noise (including thermal and quantization noise following Gaussian and Uniform distributions, respectively) and from folding non-idealities \cite{Bhandari:2021:J}, to achieve successful recovery.} 

\rgn{In our case, as shown in \fig{fig:noise}, experimental observation of the noise from \madc follows  Gaussian statistics, approximately. Though our goal is not to validate the precise nature of the distribution, we aim to quantify the system’s signal-to-noise ratio (SNR). Given that the measurement noise can be explained by an approximately zero mean and a standard deviation of $0.11$, this translates to an SNR of $19.4$ dB.}

\rgn{Our method outlined in \algref{alg:rsnu} is designed to tackle noisy scenarios. We do so by adapting  $\sigma$ in \eqref{eq:global problem} that uses a proxy of the SNR as an input to the algorithmic framework. Empirically, we have observed that $\sigma = 2\alpha\max \left|\lambda_{i}\right|/(2^B-1)$ is an effective estimate, where $B$ is the bit budget and $\alpha>0$ is a constant depending on noise or distortion. For low-frequency inputs ($\leqslant 1$ \khz), we set $\alpha = 2$ as $\sigma$ is dictated by quantization and system noise. As for high-frequency inputs ($\geqslant 1$ \khz), we set $\alpha=4$ as non-ideal foldings are dominant, inducing \textit{significant data distortion challenges}.}

\bpara{Multi-Channel Hardware.} Our multi-channel sampling hardware is shown in \fig{fig:proto} and is based on the \madc introduced in \cite{Bhandari:2021:J}. We particularly, use off-the-shelf electronic components resulting in an inexpensive design implementing the \usfmc architecture in \fig{fig:architecture}.

\begin{table*}[!t]
\centering
\caption{Hardware Based Experimental Parameters and Performance Evaluation for Low-Frequency Inputs.}
\label{tab:exp1}
\resizebox{\textwidth}{!}{
\begin{tabular}{@{}lccccccccccc@{}}
\toprule
\multicolumn{1}{c}{Figure}  & $N$ & $\fs$ & $T_d$ & $\lambda_0$ & $\lambda_1$ & $\norm{g}_{\infty}$ & $f_k$ &$\widetilde f_k$ &  $\msef$ & Run-Time
\\ \midrule
&                              &               (\hz)            &  ($\mu$s)    & (V) & (V) & (V)               & (\khz)& (\khz) & & (s) \\ \midrule
\multicolumn{1}{c}{\fig{fig: hardware experiment-1} (a)}&$200$&$877$&$200$&$0.98$&$1.88$&$8.88$&$[0.4,0.7,1.0]$&$[0.400,0.700,1.000]$&$2.03\times10^{-6}$&\rg{$6.10\times10^{-1}$}\\
\multicolumn{1}{c}{\textemdash}&$200$&$461$&$200$&$0.98$&$1.88$&$8.92$&$[0.4,0.7,1.0]$&$[0.400,0.700,1.000]$&$6.32\times10^{-7}$&\rg{$5.73\times10^{-1}$}\\
\multicolumn{1}{c}{\textemdash}&$200$&$211$&$200$&$0.98$&$1.88$&$8.88$&$[0.4,0.7,1.0]$&$[0.400,0.700,1.000]$&$1.78\times10^{-6}$&\rg{$5.70\times10^{-1}$}\\
\multicolumn{1}{c}{\textemdash}&$100$&$89$&$200$&$0.98$&$1.88$&$8.90$&$[0.4,0.7,1.0]$&$[0.400,0.700,1.000]$&$1.16\times10^{-6}$&\rg{$3.94\times10^{-1}$}\\
\multicolumn{1}{c}{\textemdash}&$100$&$79$&$200$&$0.98$&$1.88$&$8.99$&$[0.4,0.7,1.0]$&$[0.400,0.700,1.000]$&$1.71\times10^{-7}$&\rg{$3.54\times10^{-1}$}\\
\multicolumn{1}{c}{\fig{fig: hardware experiment-1} (b)}&$100$&$59$&$200$&$0.98$&$1.88$&$8.97$&$[0.4,0.7,1.0]$&$[0.400,0.700,1.000]$&$5.08\times10^{-7}$&\rg{$3.86\times10^{-1}$}\\
\multicolumn{1}{c}{\textemdash}&$100$&$41$&$200$&$0.98$&$1.88$&$9.01$&$[0.4,0.7,1.0]$&$[0.400,0.700,1.000]$&$3.98\times10^{-7}$&\rg{$3.80\times10^{-1}$}\\
\multicolumn{1}{c}{\fig{fig: hardware experiment-1} (c)}&$100$&$29$&$200$&$0.98$&$1.88$&$8.82$&$[0.4,0.7,1.0]$&$[0.400,0.700,1.000]$&$4.33\times10^{-7}$&\rg{$4.08\times10^{-1}$}\\
\bottomrule
\end{tabular}
}
\end{table*}

\begin{figure*}[!t]
\centering
\includegraphics[width=\linewidth]{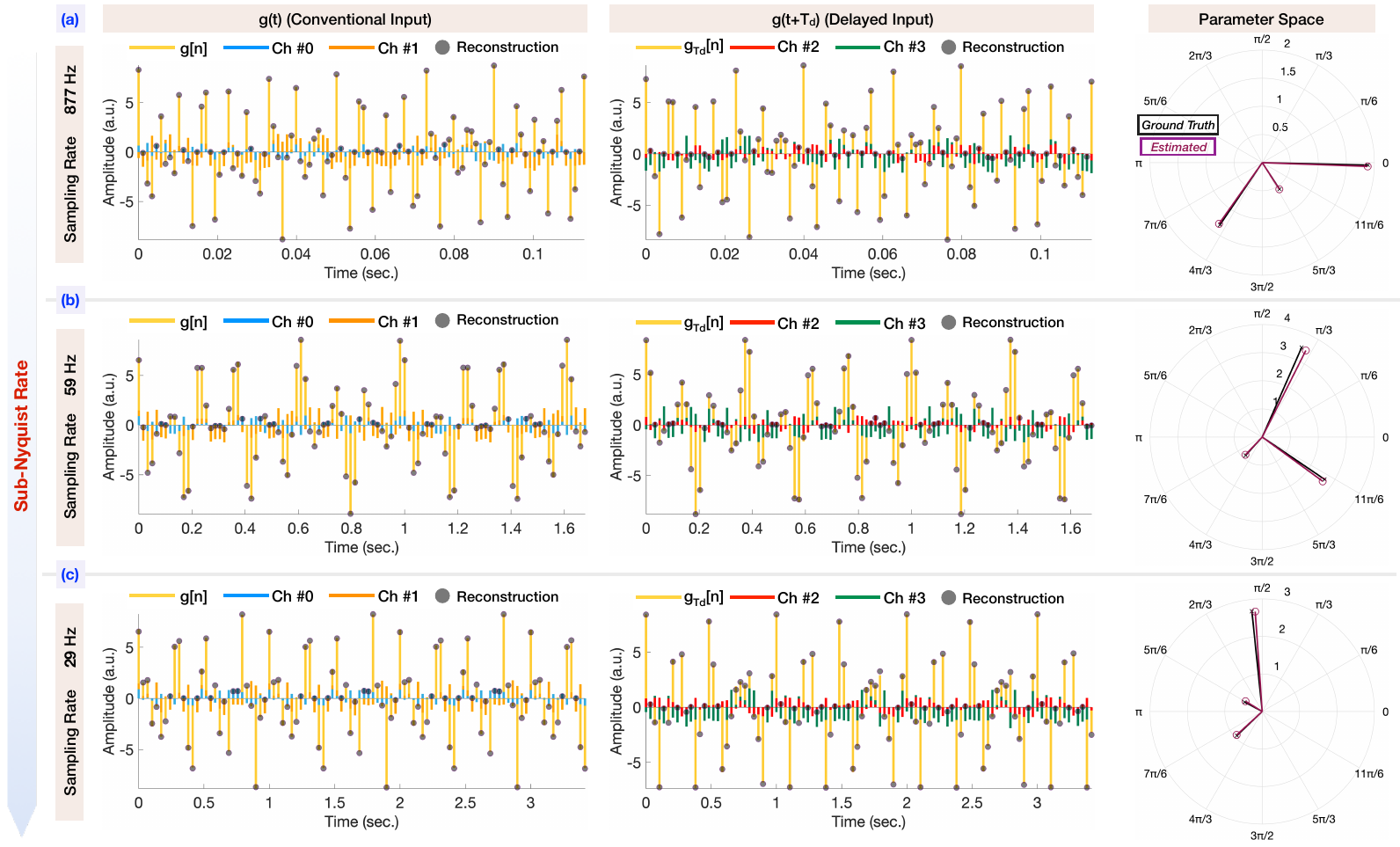}
\caption{Hardware validation with low-frequency inputs. The ground-truth frequencies are $\{f_k\}_{k\in\id{K}}=[400,700,1000]$ \hz. We conduct 3 experiments with decreasing sampling rates: (a) 877 \hz, (b) 59 \hz and (c) 29 \hz. The signal recoveries are shown in the first two columns; the estimated parameters $\{\widetilde c_k,\widetilde \omega_k\}_{k\in \id{K}}$ are plotted in phasor domain in the last column. The HDR signals are successfully recovered in all realizations with MSE bounded by $10^{-6}$.}
\label{fig: hardware experiment-1}
\end{figure*}

\bpara{Experimental Protocol.} 
We use \texttt{TG\-5011A} as our HDR analog signal generator. Its output is fed to the \usfmc hardware with tunable thresholds ($\lambda_i$). Together with the \usfmc output, we simultaneously record the original HDR input on the \texttt{PicoScope 3406D} oscilloscope, serving as the ground-truth $\{\gn, \gdn\}$.
We sample the \usfmc output with $6$-bit resolution, particularly highlighting the algorithmic capability in the presence of quantization noise. That said, the ground-truth $\{\gn, \gdn\}$ is sampled with $7$-bits to accommodate the HDR swing of the input signal. Experimental parameters such as, ground-truth frequencies $f_k$, dynamic range, sampling frequency $f_s$, delay $T_d$, among others are tabulated in the first row of \tabref{tab:exp1} and \tabref{tab:exp2}, respectively. We conduct $14$ experiments, pushing the limits of current \usfmc hardware. We cover a range of $0.4$ to $7$ \khz signals with \adcDR $\sim 9\lambda$. We study the performance of the \rsnu by considering two scenarios, \viz (i) low-frequency signals ($\leq1$ \khz bandwidth) and (ii) high-frequency signals (signals in range of 4 \khz to 7 \khz). In either case, 
\begin{enumerate}[leftmargin = *, label = \uline{\arabic*})]
\item We consider $K=6$ complex-exponentials mapping to $K'=3$ real sinusoids.
\item Keeping the input fixed, we vary $f_s$ to test the limits of our \rsnu algorithm in a progressively challenging fashion; $f_s$ goes from $877$ \hz to $11$ \hz, invoking a truly \snyq flavor.
\end{enumerate}

\subsection{Experimental Results}

\npara{Low-Frequency Tests.} The input signal frequencies are $\{\mat{f}_k\}_{k\in\id{K}}=[400, 700, 1000]$ \hz and DR is $\norm{g}_{\infty}=9.18\lambda_{0}=4.79\lambda_{1}$. Given this frequency range $\fmax = 1000$ \hz, we set $\alpha = 2$  \footnote{$\alpha\approx 1$ indicates that the quantization dominates the measurement distortion and the non-ideal folding \cite{Bhandari:2021:J} due to hardware imperfection is mild.}. 
\begin{table*}[!htb]
\centering
\caption{Hardware Based Experimental Parameters and Performance Evaluation for High-Frequency Inputs. }
\label{tab:exp2}
\resizebox{\textwidth}{!}{
\begin{tabular}{@{}lccccccccccc@{}}
\toprule
\multicolumn{1}{c}{Figure}  & $N$ & $\fs$ & $T_d$ & $\lambda_0$ & $\lambda_1$ & $\norm{g}_{\infty}$ & $f_k$ &$\widetilde f_k$ &  $\msef$ & Run-Time
\\ \midrule
&                              &               (\hz)            &  ($\mu$s)   & (V) & (V) & (V)               & (\khz)& (\khz) & & (s) \\ \midrule
\multicolumn{1}{c}{\fig{fig: hardware experiment-3} (a)}&$200$&$877$&$50$&$0.98$&$1.88$&$8.73$&$[4,5,6]$&$[4.000,5.000,6.000]$&$2.63\times10^{-4}$&\rg{$5.47\times10^{-1}$}\\
\multicolumn{1}{c}{\textemdash}&$100$&$179$&$50$&$1.30$&$1.46$&$4.76$&$[5,6,7]$&$[4.998,5.998,6.997]$&$5.67\times10^{0}$&\rg{$3.98\times10^{-1}$}\\
\multicolumn{1}{c}{\textemdash}&$100$&$79$&$50$&$1.30$&$1.46$&$4.68$&$[5,6,7]$&$[4.998,5.998,6.997]$&$5.81\times10^{0}$&\rg{$3.81\times10^{-1}$}\\
\multicolumn{1}{c}{\textemdash}&$100$&$29$&$50$&$1.30$&$1.46$&$3.97$&$[5,6,7]$&$[4.999,5.999,6.999]$&$1.51\times10^{0}$&\rg{$3.71\times10^{-1}$}\\
\multicolumn{1}{c}{\fig{fig: hardware experiment-3} (b)}&$100$&$17$&$50$&$1.20$&$1.45$&$4.05$&$[5,6,7]$&$[4.999,5.999,6.999]$&$1.47\times10^{0}$&\rg{$3.86\times10^{-1}$}\\
\multicolumn{1}{c}{\fig{fig: hardware experiment-3} (c)}&$100$&$11$&$50$&$1.30$&$1.46$&$3.93$&$[5,6,7]$&$[4.999,5.999,6.999]$&$1.47\times10^{0}$&\rg{$4.12\times10^{-1}$}\\
\bottomrule
\end{tabular}
}
\end{table*}
\begin{figure*}[!t]
\centering
\includegraphics[width=\linewidth]{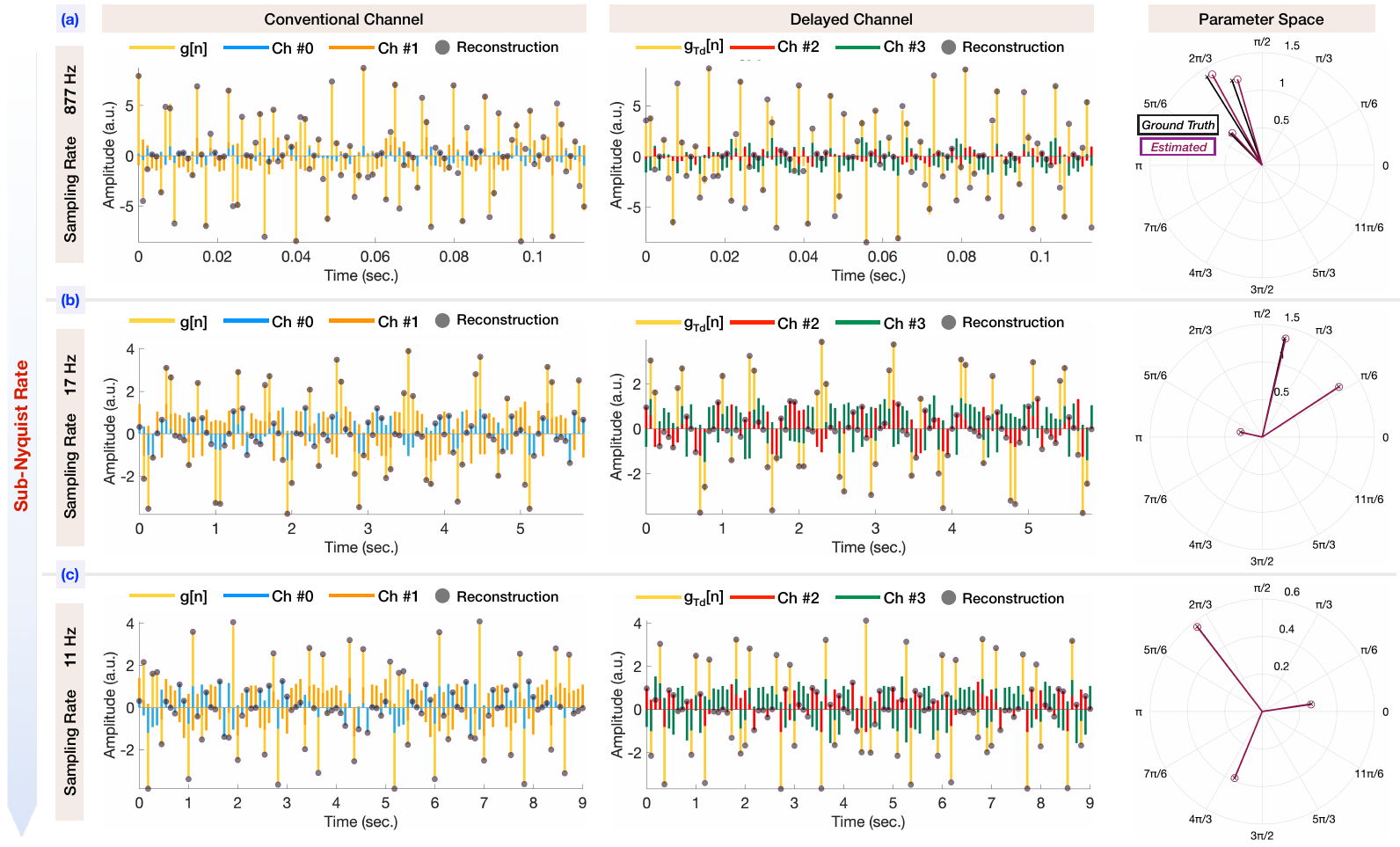}
\caption{Hardware validation with high-frequency inputs. The frequencies are $\{\mat{f}_k\}_{k\in\id{K}}=[4, 5, 6, 7]$ \khz. We conduct 3 experiments with decreasing sampling rates: (a) 877 \hz, (b) 17 \hz and (c) 11 \hz. The signal recoveries are shown in the first two columns; the estimated parameters $\{\widetilde c_k,\widetilde \omega_k\}_{k\in \id{K}}$ are plotted in phasor domain in the last column. The HDR signals are accurately recovered with a low sampling rate (up to $0.078$\%\xspace of the Nyquist frequency), providing a estimation accuracy $ \max_k | {\mat{f}_k - {\widetilde{\mat{f}}}_k} | < 1$ \hz from $6$-bit data, while the conventional method results in $10$-$33$ \hz, even using $7$-bit resolution. 
}
\label{fig: hardware experiment-3}
\end{figure*}
The \rsnu successfully recovers the input signal
$\widetilde{\bar{g}}_{i} = u^{\opt}_{i} + v_{i}$ shown in \fig{fig: hardware experiment-1}, with experiments performed at different \snyq rates. The \usfmc pipeline achieves an accurate spectral estimation with a low sampling rate up to $1.45\%$ of the Nyquist frequency ($2000$ \hz). Together with $9.18\times$ DR improvement, this significantly reduces the sampling cost and power consumption. As shown in \fig{fig: hardware experiment-1}, despite the quantization noise, the HDR signals are accurately reconstructed in all realizations with MSE upper-bounded by $10^{-6}$. The results are summarized in 
\tabref{tab:exp1}.

\npara{High-Frequency Tests.} Here, we further push the frequency range of the \usfmc setup in $4$ to $7$ \khz range, $\{\mat{f}_k\}_{k\in\id{K}}=[4, 5, 6, 7]$ \khz. As before, use   $\sigma = 2\alpha\max \left|\lambda_{i}\right|/(2^B-1)$ with $\alpha=4$ in \eqref{eq:global problem}\footnote{$\alpha \geqslant 3$ indicates that the non-ideal folding due to hardware imperfection dominates the measurement distortion \cite{Bhandari:2021:J}.}. Notice that, \sn with a lower sampling rate is prone to noise interference \cite{Xiao:2018:J}. We use $\msem/f_s = \max | {\mat{f}_k - {\widetilde{\mat{f}}}_k} |/f_s$ to quantify the estimation sensitivity relative to the sampling frequency $f_s$. In hardware experiments, the spectral estimation using the conventional \adc $\{\gn, \gdn\}$ has a large error ($7$-bit resolution): $\msem/f_s=194.12\%$ ($f_s = 17$ \hz), $\msem/f_s=90.91\%$ ($f_s = 11$ \hz),
\begin{enumerate}
\item $f_s = 17$ \hz: estimation is $\left[4.999,6.016,7.033\right]$ \khz. 
\item $f_s = 11$ \hz: estimation is $\left[4.999,6.010,7.010\right]$ \khz. 
\end{enumerate}
where the ground-truth is $\left[5,6,7\right]$ \khz. Despite the non-ideal folding, the robust \snu method or \rsnu attains signal recovery 
$\widetilde{\bar{g}}_{i} = u^{\opt}_{i} + v_{i}$ shown in \fig{fig: hardware experiment-3}, with experiments performed at different \snyq rates from $877$ \hz to $11$ \hz. The \usfmc pipeline achieves spectral estimation with a much lower sampling rate, up to $0.078$\% Nyquist rate. In these settings, non-ideal folding \cite{Bhandari:2021:J} may create challenges for signal recovery. As shown in \fig{fig: hardware experiment-3}, the \khz range frequencies are accurately estimated with $\msem/f_s < 5.88\%$ using $6$-bit resolution, providing $33\times$ accuracy improvement compared to the $7$-bit conventional ADCs. This demonstrates the significance and benefits of \usfmc pipeline that provides more precise spectral estimation induced by higher digital resolution. Together with $8.90\times$ DR improvement, this demonstrates the robustness of the proposed \rsnu algorithm. The results are summarized in \tabref{tab:exp2}.

\section{Conclusion}
Undersampled or sub-Nyquist spectral estimation, aimed at capturing high bandwidth signals where Nyquist rate sampling is impractical or expensive, has been a focus of research for decades. However, most advancements have centered on algorithms and can not handle high-dynamic-range (HDR) signal acquisition. To overcome this fundamental mismatch between theory and practice, our work introduces a multi-channel scheme for sub-Nyquist spectral estimation based on the Unlimited Sensing Framework. Our algorithm utilizes modulo samples, and our recovery theorem guarantees that $K$ frequencies can be estimated from just $6K+4$ measurements, independent of the sampling rate. Hardware experiments with modulo ADCs, coupled with a novel, robust recovery algorithm, indeed show that kilohertz range signals can be recovered at hertz scale sampling rates, providing a true sense of sub-Nyquist spectral estimation. These results serve as a compelling stepping-stone, catalyzing a whole new range of spectral estimation applications, previously unexplored via the lens of USF. Going forward, tightening recovery guarantees, reducing the number of channels, developing robustness guarantees and integrating our approach with new hardware and applications, all remain interesting future research directions.

\bibliographystyle{IEEEtran_URL}


\end{document}